\theoremstyle{plain}
\newtheorem{theorem}{Theorem}[section]
\newtheorem{lemma}[theorem]{Lemma}
\newtheorem{corollary}[theorem]{Corollary}
\theoremstyle{definition}
\newtheorem{definition}[theorem]{Definition}
\theoremstyle{remark}
\newtheorem{remark}[theorem]{Remark}
\DeclareMathOperator*{\argmax}{arg\,max}
\DeclarePairedDelimiter{\norm}{\lVert}{\rVert} 
\title{Beyond Time-Average Convergence:\\
Near-Optimal Uncoupled Online Learning via Clairvoyant Multiplicative Weights Update}
\def\blfootnote{\xdef\@thefnmark{}\@footnotetext}
\author{%
  Georgios Piliouras\\  SUTD\\
  georgios@sutd.edu.sg\\
  \and
   Ryann Sim\\  SUTD\\
  ryann\_sim@mymail.sutd.edu.sg \\
  \and    
  Stratis Skoulakis\\
  EPFL\\
  efstratios.skoulakis@epfl.ch
  }
\date{}
\begin{document}

\maketitle

\begin{abstract}
In this paper, we provide a novel and simple algorithm, Clairvoyant Multiplicative Weights Updates (CMWU) for regret minimization in general games. CMWU effectively corresponds to the standard MWU algorithm but where all agents, when updating their mixed strategies, use the payoff profiles based on tomorrow's behavior, i.e. the agents are clairvoyant. CMWU achieves constant regret of $\ln(m)/\eta$ in all normal-form games with m actions and fixed step-sizes $\eta$. Although CMWU encodes in its definition a fixed point computation, which in principle could result in dynamics that are neither computationally efficient nor uncoupled, we show that both of these issues can be largely circumvented. Specifically, as long as the step-size $\eta$ is upper bounded by $\frac{1}{(n-1)V}$, where $n$ is the number of agents and $[0,V]$ is the payoff range, then the CMWU updates can be computed linearly fast via a contraction map. This implementation results in an uncoupled online learning dynamic that admits a $O (\log T)$-sparse sub-sequence where each agent experiences at most $O(nV\log m)$ regret. This implies that the CMWU dynamics converge with rate $O(nV \log m \log T / T)$ to a \textit{Coarse Correlated Equilibrium}. The latter improves on the current state-of-the-art convergence rate of \textit{uncoupled online learning dynamics}~\cite{daskalakis2021near,anagnostides2021near}. 
\end{abstract}

\section{Introduction}

The connection between online learning and game theory has been extensively studied for decades. The emergence of \textit{online learning algorithms} guaranteeing comparable payoffs with the \textit{highest-rewarding strategy} has provided a landmark method of understanding how selfish agents can act in an adversarial environment, while the notion of \textit{Coarse Correlated Equilibrium} (CCE) has provided a game-theoretic characterization of the limiting behavior of such \textit{online learning dynamics} \cite{Bla56,hannan1957approximation,roughgarden2010algorithmic,hart2013simple}.

An important aspect of \textit{online learning dynamics} is that agents can collectively learn a CCE without having explicit knowledge of the full game description. By definition, \textit{online learning algorithms} take as input the sequence of vectors corresponding to the payoff of each possible action, making no assumption about how these payoff vectors are derived \cite{hazan2016introduction}. This idea of information exchange can be concisely described with the following distributed protocol, known in the literature as \textit{uncoupled online learning dynamics} \cite{Daskalakis:2011:NNA:2133036.2133057}:   
\begin{enumerate}
\item No agent is aware of their payoff matrix or the payoff matrix of any other agent.

\item At each round $t$, each agent $i$ announces their mixed strategy $x_i^t$.

\item Based on the announced strategy vector $x^t
:=(x_1^t, \ldots, x_n^t)$, each agent $i$ \textbf{learns only} their payoff vector $u_i(x_{-i}^t)$.

\item Each agent $i$ uses $(u_i(x_{-i}^0),\ldots,u_i(x_{-i}^t))$ to update their mixed strategy at round $t+1$.
\end{enumerate}

If agents use one of the classical \textit{no-regret} online learning algorithms (e.g. Hedge, Regret Matching, Multiplicative Weights Update (MWU) etc), then the \textit{time-average behavior} of the resulting uncoupled online learning dynamics will converge to CCE with rate $O(1/\sqrt{T})$ \cite{cesa2006prediction}. Despite the fact that $O(1/\sqrt{T})$ has proven to be the optimal time-average regret that an agent can achieve in the adversarial case, the question of determining which update rule gives the fastest convergence to CCE has remained open. Over the years, a series of works has proposed update rules with better and better rates \cite{cesa2006prediction,DDK15,hsieh2021adaptive,rakhlin2013optimization,Syrgkanis:2015:FCR:2969442.2969573,chen2020hedging} until the recent seminal work of Daskalakis et. al \cite{daskalakis2021near}, which established that \textit{Optimistic Multiplicative Weights Update} (OMWU) admits $O(\log^4 T / T)$ rate of convergence. This rate matches (up to logarithmic factors) the lower bound $\Omega(1/T)$ on the convergence rate of any uncoupled online learning dynamic \cite{Daskalakis:2011:NNA:2133036.2133057}.  

\textbf{Self-Play and Algorithmic Applications}

Apart from theoretical interest, uncoupled online learning dynamics admit important applications which makes the above line of research even more exciting and significant \cite{roughgarden2010algorithmic,daskalakis2017training,KroerFS18,moravvcik2017deepstack,wei2018more,palaiopanos2017multiplicative,MPP2018}. Some of their interesting algorithmic properties are: $i)$ their decentralized nature permits efficient distributed implementation, $ii)$ the implicit access to the game via payoff vectors permits game-abstractions with vastly reduced size, and $iii)$ their iterative guarantees can be preferable over the \textit{all-or-nothing guarantees} of linear programs. A notable example of their algorithmic success is the design of state-of-the art AI poker programs based on iterative self-play that outperform previous LP-based approaches, and were able to compete with human professionals \cite{brown2018superhuman,TBJB15,ZinkevichJBP07}. We remark that this result has motivated a parallel line of research studying the convergence rates of uncoupled online learning dynamics for \textit{extensive form games} \cite{farina2019optimistic,CelliMF021,FBS20,FarinaMK0S18,KroerFS18}.

\textbf{Our Contribution and Results}

All previous works in this area couple the goals of minimizing adversarial regret and fast time-average convergence to CCE \cite{cesa2006prediction,DDK15,hsieh2021adaptive,rakhlin2013optimization,daskalakis2017training,Syrgkanis:2015:FCR:2969442.2969573,chen2020hedging}. Since many of the aforementioned applications are in the realm of self-play where all agents are programmed to follow the update rule, guaranteeing adversarial no-regret is not a necessity. Indeed, it is not clear why time-average behavior should be the only way to deduce a CCE.  
Any simple and efficient deduction rule would serve the algorithmic benefits of \textit{uncoupled online learning dynamics} \cite{FarinaMK0S18,KroerFS18}. Motivated by the above, our contributions can be summarized as follows:
\begin{itemize}
    \item We introduce a novel update rule, called Clairvoyant Multiplicative Weights Update (CMWU) that produces sequences of strategy profiles with constant regret in general games. 
    
    \item In its generic form, CMWU is a centralized update rule and does not fit in the online learning framework. However, based on CMWU we design an uncoupled online learning dynamic called CMWU dynamics which gives fast convergence to CCE \textit{beyond the time-average sense}.
    
    \item  More precisely, we establish that given any trajectory of length $T$ of CMWU dynamics, the $\log T$-sparse sub-trajectory always admits constant regret for all agents. As a result, the time-average behavior of the $O(\log T)$-sparse sub-trajectory converges to CCE with 
    $O\left(\log T / T\right)$ rate improving on the previous state-of-the-art $O\left(\log^4 T / T\right)$ achieved by Daskalakis et al. \cite{daskalakis2021near}.

    \item The update rule of CMWU dynamics (presented in Algorithm~\ref{alg:CMW}) admits a simple form and an efficient implementation (requiring only a single step of the MWU algorithm at each round). Finally, the proof of convergence of CMWU dynamics is far simpler than previous proofs of convergence of \textit{uncoupled online learning dynamics} \cite{daskalakis2021near}. 
\end{itemize}
In Table \ref{table:results} we summarize the most important results concerning the convergence to CCE of uncoupled online learning dynamics.
\renewcommand{\arraystretch}{1.3}
\begin{table}[t]
\centering
\caption{Prior results for convergence to CCE in uncoupled online learning dynamics. $n$ denotes the number of players, $m$ denotes the number of actions per player and $V$ denotes the maximum value in the game payoff tensor.}

\label{table:results}
\begin{tabular}{@{}cccc@{}}
\toprule
\textbf{Update Rule}                                              & \textbf{Deduction Rule} & \textbf{Rate of Convergence}                                                                                           & \textbf{Game Type} \\ \midrule
MWU                                                               & Time-average            & $O\left(V\sqrt{{\log m}/{T}}\right)$ \cite{cesa2006prediction}                                    & General-sum        \\ \midrule
\begin{tabular}[c]{@{}c@{}}Excessive Gap\\ Technique\end{tabular} & Time-average            & $O\left(V\log m(\log T  + \log^{3/2}m)/T\right)$ \cite{DDK15}                                                             & 2-player zero-sum  \\ \midrule
\begin{tabular}[c]{@{}c@{}} DS-OptMD,\\ OptDA\end{tabular}                                                            & Time-average            & $\log^{O(V)}(m)/T$ \cite{hsieh2021adaptive}                                                 & 2-player zero-sum  \\ \midrule
OMWU                                                              & Time-average            & $O\left(V\log m\sqrt{n}/T^{3/4}\right)$ \cite{rakhlin2013optimization,Syrgkanis:2015:FCR:2969442.2969573} & General-sum        \\ \midrule
OMWU                                                              & Time-average            & $O\left(V\log^{5/6}m/T^{5/6}\right)$ \cite{chen2020hedging}                                             & General-sum        \\ \midrule
OMWU                                                              & Time-average            & $O\left({nV\log m\log^4 T}/{T}\right)$ \cite{daskalakis2021near}                                         & General-sum        \\ \midrule
CMWU & \begin{tabular}[c]{@{}c@{}} $\log T$-sparse\\ average\end{tabular} 
& \begin{tabular}[c]{@{}c@{}}$O\left(nV\log m \log T /T\right)$ \\ (\textbf{Theorem} \ref{t:uncoupled})\end{tabular} 
& General-sum \\ \bottomrule
\end{tabular}
\end{table}

\begin{remark}
All the results mentioned in Table \ref{table:results} additionally admit $\tilde{O}(\sqrt{T})$ adversarial guarantees. As mentioned above, this means that once a subset of the agents act adversarially by not following the update rule of the online learning dynamic, the agents following the update rule are guaranteed to experience at most $\tilde{O}(\sqrt{T})$ regret. The reason why CMWU is able to achieve better guarantees with simpler analysis comes from the fact that it neglects the adversarial no-regret guarantees that are irrelevant in self-play/training settings and focuses on minimizing the regret in specific parts of the sequence.
\end{remark}

\begin{remark} Following our work, \cite{anagnostides} produced time-average convergence to CE via online learning at a rate $O(nm^{5/2}\log T/T)$. In comparison, our dependency on the number of actions $m$ is exponentially smaller at $\log m$. Furthermore, their update rule is based on self-concordant regularization and thus admits high per-iteration complexity, a point which is mentioned by the authors.
\end{remark}

\noindent \textbf{The Philosophy and Design of CMWU}

We introduce a radically different philosophy in the design of online learning algorithms. We shift away from the prevailing paradigm by defining a novel algorithm that we call \textit{Clairvoyant Multiplicative Weights Update} (CMWU). CMWU is MWU equipped with a mental/simulated/synthetic model (jointly shared across all agents) about the state of the system in its next period. Each agent records its mixed strategy, i.e., its belief about what it expects to play in the next period in this shared mental model, which is internally updated using MWU without any changes to the real-world behavior up until it equilibrates, thus marking its consistency with the next day’s real-world outcome. It is then and only then that agents take action in the real-world, effectively doing so with “full knowledge” of the state of the system on the next day, i.e., they are clairvoyant. CMWU acts as 
 MWU
with one day look-ahead, achieving bounded regret. CMWU update rule is closely related with the \textit{Proximal Point Method} (PPM) \cite{moreau1965proximite,martinet1970regularisation,rockafellar1976monotone, nesterov1983method} which is an implicit (and therefore unimplementable) method that admits arbitrarily fast convergence in the context of convex minimization. Similarly to PPM, in order to implement the update rule of CMWU one would require access to the explicit description of the game and to additionally solve a \textit{fixed-point problem}.

Our main technical contribution consists of establishing that for sufficiently small step-sizes, the update rule of CMWU can be computed via the iterations of a contraction map. This not only provides a way to compute the CMWU update rule using a simple and efficient process, but also opens a pathway for \textit{decentralized computation} in the context of \textit{uncoupled online learning dynamics}. The basic idea behind CMWU dynamics (Algorithm~\ref{alg:CMW}) is to
punctuate the history of play with 
\textit{anchor points} which are equally spaced in distance $\Theta(\log T)$, such that any anchor point is the CMWU update of the previous anchor point. In this way, the regret in the anchor sequence is constant for any agent. Moreover, the intermediate points between two anchor points correspond to the iterations of the contraction map. Surprisingly enough, all of the above can be implemented with just an $\textit{if condition}$ and an MWU-esque exponentiation (see Algorithm~\ref{alg:CMW}).

\section{Preliminaries \& Model} \label{sec:notation}

\subsection{Normal Form Games}
We begin with basic definitions from game theory.  
A finite normal-form game $\Gamma  \equiv \Gamma({\cal N}, {\cal S}, u)$ consists of a set of players ${\cal N}=\{1,...,n\}$ where player $i$ may select from a finite set of actions or pure strategies ${\cal S}_i$. 
Each player has a payoff function $u_i: {\cal S}\equiv \prod_{i} {\cal S}_i \to \mathbb{R^+}$ assigning reward $u_i(s)$ to player $i$. 
It is common to describe $u_i$ with a payoff tensor $A^{(i)}$ where $u_i(s)=A^{(i)}_s$. 
Let $m=\max_i |{\cal S}_i|$ denote the maximum number of pure strategies in $\Gamma$, and let $V=\max_{i,s} A^{(i)}_s$ denote the maximum payoff value of any strategy in $\Gamma$.

Players are also allowed to use mixed strategies $x_i= (x_{is_i})_{s_i\in {\cal S}_i}\in \Delta({\cal S}_i)\equiv {\cal X}_i$.  
The set of mixed strategy profiles is ${\cal X}= \prod_{i}{\cal X}_i$. 
A strategy is fully mixed if $x_{is_i}>0$ for all $s_i\in {\cal S}_i$ and $i\in {\cal N}$. 
Individuals compute the payoff of a mixed strategy linearly using expectation. 
Formally, 
\noindent
\begin{align}u_i(x)=\sum_{s\in {\cal S}} u_i(s)\prod_{i\in {\cal N}} x_{is_i}. \end{align}

We also introduce additional notation to express player payouts for brevity in our analysis later.
Let $v_{is_i}(x)=u_i(s_i;x_{-i})$\footnote{$(s_i;x_{-i})$ denotes the strategy $x$ after replacing $x_i$ with $s_i$.} denote the reward $i$ receives if $i$ opts to play pure strategy $s_i$ when everyone else commits to their strategies described by $x$. 
This results in $u_i(x)=\langle v_i(x_{-i}),x_{i}\rangle$. Next we will introduce the notion of Coarse Correlated Equilibria (CCE), which is the key equilibrium notion we explore in our work.
\begin{definition}\label{d:eCCE}
A probability distribution $\mu$ over pure strategy profiles $s = (s_1,\ldots,s_n) \in \mathcal{S}$ is called an \textit{$\epsilon$-approximate Coarse Correlated Equilibrium} if for each agent $i \in [n]$,
\[\mathrm{E}_{s\sim \mu}\left[u_i(s)\right] \geq  \mathrm{E}_{s \sim \mu}\left[ u_i(s'_i, s_{-i})\right]- \epsilon ~~~\text{ for all actions }s_i \in \mathcal{S}_i \]
\end{definition}

\begin{definition}
Given a strategy profile $x:= (x_1,\ldots,x_n) \in \mathcal{X}$, $\mu_x$ denotes the product probability distribution over strategy profiles $s=(s_1,\ldots,s_n) \in \mathcal{S}$ induced by $x$, $\mu_x(s) := \Pi_{s_i \in s} x_{is_i}$
\end{definition}


\subsection{Uncoupled Online Learning Dynamics in Games}
We study games from a learning perspective where agents iteratively update their mixed strategies
over time based on the performance of pure strategies in prior iterations via an uncoupled, online adaptive algorithm. We will start by describing one of the most classical online learning algorithms, Multiplicative
Weights Update (MWU).

The update rule for MWU can be written as 
\begin{equation}\label{eqn:MWU}\tag{MWU}
\begin{aligned}
x_{is_i}^{t+1}&=\frac{x_{is_i}^{t}\exp{(\eta_i\cdot v_{is_i}(x^{t})})}{\sum_{\bar{s}_i\in {\cal S}_i}x_{i\bar{s}_i}^{t}\exp{(\eta_i\cdot  v_{i\bar{s}_i}(x^{t}))}}
\end{aligned}
\end{equation}

The remarkable guarantee of \ref{eqn:MWU} and many other online learning algorithms
is that the actual payoff for agents is close to the \textit{highest-rewarding action} of the game. This property is formally captured via the notion of \textit{regret}. 

\begin{definition}[Regret]\label{d:regret}
Given a sequence of mixed strategies $x_0,\ldots,x_{T-1}$ the regret of agent $i$, $R_i(T)$, is defined as
\[R_i(T):= \max_{x_i \mathcal{X}_i} \sum_{t=0}^{T-1}  \langle v_i(x_{-i}) , x_i^t \rangle - \sum_{t=0}^{T-1}\langle v_i(x_{-i}^t) , x_i^t \rangle\]
\end{definition}
Once agent~$i$ adopts MWU, then they are ensured to experience sub-linear regret, $R_i(T) \leq O\left(\sqrt{T}\right)$ no matter how the other agents update their strategies \cite{cesa2006prediction}. This implies that the time-averaged difference between the \textit{payoff of the best fixed strategy} and \textit{the actual produced reward} goes to zero with rate $O(1/\sqrt{T})$. Any online learning algorithm that is able to guarantee $R_i(T) = o(T)$ is called \textit{no-regret}. 

There exists a folklore connection between any no-regret online learning algorithms and Coarse Correlated Equilibrium.
\begin{theorem}[Folkore]\label{c:1}
Given a sequence of mixed strategies $(x_0,\ldots,x_{T-1})$, then the probability distribution $\hat{\mu}:= \sum_{t=0}^{T-1} \mu_{x_t}/T$ is an $(R(T)/T)$-approximate Coarse Correlated Equilibrium where $R(T):=\max_{i \in [n]}R_i(T)$.
\end{theorem}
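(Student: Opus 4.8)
The plan is to unfold both sides of the coarse correlated equilibrium inequality from Definition~\ref{d:eCCE} for the candidate distribution $\hat\mu$ and to recognize the resulting deviation gain as exactly the normalized regret $R_i(T)/T$. First I would compute the on-path payoff of agent $i$. Since expectation is linear and $\hat\mu$ is the uniform mixture of the product distributions $\mu_{x_t}$, I would write
\[\mathrm{E}_{s\sim\hat\mu}[u_i(s)] = \frac{1}{T}\sum_{t=0}^{T-1}\mathrm{E}_{s\sim\mu_{x_t}}[u_i(s)] = \frac{1}{T}\sum_{t=0}^{T-1} u_i(x^t) = \frac{1}{T}\sum_{t=0}^{T-1}\langle v_i(x_{-i}^t), x_i^t\rangle,\]
where the middle equality uses that the expected payoff of a profile drawn from the product distribution $\mu_{x_t}$ is precisely $u_i(x^t)$, and the last uses the identity $u_i(x)=\langle v_i(x_{-i}),x_i\rangle$ recorded in the preliminaries.

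Next I would handle the deviation payoff. The crucial observation is that under a product distribution $\mu_{x_t}$ the marginal over the opponents' actions $s_{-i}$ does not depend on agent $i$'s own coordinate, so replacing $s_i$ by any fixed pure action $s_i'$ gives
\[\mathrm{E}_{s\sim\mu_{x_t}}[u_i(s_i',s_{-i})] = u_i(s_i';x_{-i}^t) = v_{is_i'}(x_{-i}^t),\]
and averaging over $t$ yields $\mathrm{E}_{s\sim\hat\mu}[u_i(s_i',s_{-i})] = \tfrac{1}{T}\sum_{t=0}^{T-1} v_{is_i'}(x_{-i}^t)$. Subtracting the on-path payoff, the deviation gain toward $s_i'$ is $\tfrac{1}{T}\sum_t\bigl(v_{is_i'}(x_{-i}^t)-\langle v_i(x_{-i}^t),x_i^t\rangle\bigr)$.

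I would then take the maximum over deviations. Because the cumulative comparator payoff $\sum_t\langle v_i(x_{-i}^t),\cdot\rangle$ is linear in the comparator strategy, its maximum over the simplex $\mathcal{X}_i$ is attained at a vertex, i.e.\ at a pure action, so the largest deviation gain over $s_i'\in\mathcal{S}_i$ equals exactly $R_i(T)/T$ in the sense of Definition~\ref{d:regret}. Bounding $R_i(T)\le R(T)$ uniformly over $i$ then shows that the equilibrium defect for every agent is at most $R(T)/T$, which is the claim.

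The argument is essentially a chain of identity manipulations, so I do not expect a serious obstacle; the only point requiring genuine care is the marginalization step that equates the expected deviation payoff under $\mu_{x_t}$ with $v_{is_i'}(x_{-i}^t)$. This is precisely where the \emph{product} structure of $\mu_{x_t}$ (as opposed to an arbitrary correlated distribution) is used, and it is exactly what makes a no-regret guarantee translate into a \emph{coarse} correlated equilibrium rather than a full correlated equilibrium.
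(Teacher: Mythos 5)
The paper states this theorem as folklore and never proves it (no proof appears in the body or appendix), so there is no internal argument to compare against; your proof is the canonical one and it is correct. Every step checks out: the linearity/mixture decomposition of the on-path payoff, the marginalization $\mathrm{E}_{s\sim\mu_{x_t}}[u_i(s_i',s_{-i})]=v_{is_i'}(x_{-i}^t)$, which holds precisely because each $\mu_{x_t}$ is a product distribution, and the vertex argument equating the best pure deviation with the mixed comparator in Definition~\ref{d:regret} (whose printed statement has typos --- $v_i(x_{-i})$ and $x_i^t$ in the first sum should read $v_i(x_{-i}^t)$ and $x_i$ --- which your reading silently and correctly repairs). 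Your closing remark is also on point: the product structure is exactly what ties external regret to \emph{coarse} correlated equilibria, whereas correlated equilibria would require deviations conditioned on the recommendation and hence swap/internal regret.
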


Theorem~\ref{c:1} implies that if all agents update their strategies with a \textit{no-regret online learning algorithm}, then the time-average strategy vector converges to CCE with rate $o(T)/T$. As a result, the time-average strategy vector converges to CCE as $T \rightarrow \infty$.

In Algorithm~\ref{alg:online_dynamics}, we present a generic description of uncoupled online learning dynamics. 

\begin{algorithm}[H]
  \caption{Uncoupled Online Learning Dynamics}\label{alg:online_dynamics}
  \begin{algorithmic}[1]
  \ForAll{ round $t = 0, \cdots, T-1$}
  \smallskip
  \State Each player $i \in [n]$, \textbf{broadcasts} its mixed strategy $x_i^t \in \mathcal{X}_i$ 

\smallskip
  \State Each agent $i\in [n]$, \textbf{learns only} its reward vector $u_i(x_{-i}^t)$.

\smallskip
 \State Each agent $i\in [n]$, \textbf{updates} $x_i^{t+1} \in \mathcal{X}_i$ based only on $u_i(x_{-i}^0), \ldots, u_i(x_{-i}^t)$
 \EndFor
  \end{algorithmic}
\end{algorithm}

If for example the update rule of \ref{eqn:MWU} is implemented at Step~$4$ of Algorithm~\ref{alg:online_dynamics}, then the distribution $\hat{\mu} := \sum_{t=0}^{T-1} \mu_{x_t} / T$ is 
an $(1/\sqrt{T})$-approximate CCE.

\section{Clairvoyant MWU }\label{sec:Clairvoyant}

In this section, we introduce a novel learning algorithm for games that we call Clairvoyant Multiplicative Weights Updates (CMWU). Critically, CMWU, unlike MWU, \textit{forms self-confirming  predictions/beliefs} about what all opponents will play in the next time instance. Namely, all agents will form the same belief about what agent $i$ will play in the next period $t+1$ $\left(x_i^{t+1}\right)$. These beliefs/estimates are such that when agents simulate an extra period of play in their mind and update their current strategies using MWU, the resulting strategy for each agent $i$ is $x_i^{t+1}$. All agents accurately \textit{predict}  the behavior of all other agents \textit{tomorrow}, in other words they are \textit{clairvoyant}!


\noindent
The update rule for (CMWU) is as follows: 

\begin{equation}\label{eqn:CMWU}\tag{CMWU}
\begin{aligned}
x_{is_i}^{t+1}&=\frac{x_{is_i}^{t}\exp{(\eta_i\cdot v_{is_i}(x^{t+1})})}{\sum_{\bar{s}_i\in {\cal S}_i}x_{i\bar{s}_i}^{t}\exp{(\eta_i\cdot  v_{i\bar{s}_i}(x^{t+1}))}}
\end{aligned}
\end{equation}
\ref{eqn:CMWU} is an implicit method:
the new strategy $x^{t+1}$ appears on both sides of the equation, and thus the method needs to solve an algebraic/fixed point equation for the unknown $x^{t+1}$. 
\begin{theorem}\label{t:existence}
The algebraic system of equations in \eqref{eqn:CMWU}
defined by an arbitrary game $\Gamma$, an arbitrary tuple of  learning rates $\eta_i$, and any state $x^t$,
always admits a solution. 
\end{theorem}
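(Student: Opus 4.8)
The plan is to recast the implicit equation \eqref{eqn:CMWU} as the search for a fixed point of a continuous self-map on the product of simplices $\mathcal{X} = \prod_i \mathcal{X}_i$, and then invoke Brouwer's fixed-point theorem. Since $\mathcal{X}$ is a Cartesian product of finitely many probability simplices, it is a nonempty, compact, convex subset of Euclidean space, so the only real content is to produce a continuous map whose fixed points are exactly the solutions of \eqref{eqn:CMWU}.

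Concretely, holding the current state $x^t$ fixed, I would define $F \colon \mathcal{X} \to \mathcal{X}$ coordinatewise by
\[
F(y)_{is_i} \;=\; \frac{x_{is_i}^{t}\,\exp\!\bigl(\eta_i\, v_{is_i}(y)\bigr)}{\sum_{\bar{s}_i\in \mathcal{S}_i} x_{i\bar{s}_i}^{t}\,\exp\!\bigl(\eta_i\, v_{i\bar{s}_i}(y)\bigr)},
\]
so that $y$ is a fixed point of $F$ if and only if $y = x^{t+1}$ solves \eqref{eqn:CMWU}. The first step is to check that $F$ genuinely maps $\mathcal{X}$ into itself: each numerator is nonnegative (a product of the nonnegative weight $x_{is_i}^t$ and a strictly positive exponential), and the denominator normalizes so that $\sum_{s_i} F(y)_{is_i} = 1$, whence $F(y)_i \in \Delta(\mathcal{S}_i)$ for every $i$. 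The one point requiring care is that the denominator never vanishes: because $x_i^t$ is itself a probability distribution, at least one weight $x_{i\bar{s}_i}^t$ is strictly positive, and since every exponential factor is positive the denominator is strictly positive regardless of $y$. Thus $F$ is well-defined on all of $\mathcal{X}$.

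The second step is continuity. Each payoff coordinate $v_{is_i}(y) = u_i(s_i; y_{-i})$ is a multilinear---hence continuous---function of $y$; composing with the continuous exponential and dividing by the strictly positive, continuous denominator shows that $F$ is continuous on $\mathcal{X}$. With $\mathcal{X}$ compact and convex and $F$ a continuous self-map, Brouwer's theorem yields a fixed point $x^{t+1} \in \mathcal{X}$, which is the desired solution. I expect the main (and essentially only) obstacle to be the bookkeeping that guarantees $F$ is a bona fide self-map with nonvanishing denominator even when $x^t$ lies on the boundary of $\mathcal{X}$, i.e.\ when some coordinates vanish; once that is settled, existence follows immediately from Brouwer, and notably no contraction property or small step-size hypothesis is required at this stage---those enter only later when one wants to \emph{compute} the solution efficiently.
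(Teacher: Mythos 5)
Your proposal is correct and follows exactly the route the paper takes: the paper proves Theorem~\ref{t:existence} by directly invoking the Brouwer fixed-point theorem on the \eqref{eqn:CMWU} update viewed as a continuous self-map of the compact convex product of simplices $\mathcal{X}$. Your write-up simply fills in the details (well-definedness via the strictly positive denominator, continuity via multilinearity of the payoffs) that the paper leaves implicit.
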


Theorem~\ref{t:existence} follows directly by the Brouwer fixed-point theorem. Having established the fact that the agents can always \textit{collectively compute} a next step of the CMWU update rule, we present the remarkable property of CMWU stated in Theorem~\ref{thm:regret}.
\begin{theorem}\label{thm:regret}
Let $x_0,\ldots,x_{T-1}$  be a sequence of mixed strategies such that all pairs of consecutive mixed strategies 
($x_t, x_{t+1}$) satisfy Equation~\eqref{eqn:CMWU}. Then, each agent~$i$ has bounded regret $\leq \log |S_i| / \eta_i$.
\end{theorem}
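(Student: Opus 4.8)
The plan is to run the standard multiplicative-weights potential argument using the relative entropy $D(p\|q) = \sum_{s_i} p_{s_i}\log(p_{s_i}/q_{s_i})$ as the potential, but to exploit the \emph{implicit} structure of \eqref{eqn:CMWU} to obtain an \emph{exact} one-step identity rather than the usual second-order inequality. Fix an agent $i$ and an arbitrary comparator $x_i^\ast \in \mathcal{X}_i$, and abbreviate $v_i^{t+1} := v_i(x_{-i}^{t+1})$ together with the normalizer $Z_i^{t+1} := \sum_{\bar s_i} x_{i\bar s_i}^{t}\exp(\eta_i v_{i\bar s_i}^{t+1})$ appearing in \eqref{eqn:CMWU}. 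First I would compute the one-step change of the potential along the comparator: since $\log(x_{is_i}^{t+1}/x_{is_i}^{t}) = \eta_i v_{is_i}^{t+1} - \log Z_i^{t+1}$ by \eqref{eqn:CMWU}, taking the $x_i^\ast$-weighted sum over $s_i$ yields $D(x_i^\ast\|x_i^{t}) - D(x_i^\ast\|x_i^{t+1}) = \eta_i\langle v_i^{t+1}, x_i^\ast\rangle - \log Z_i^{t+1}$.

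The crux of the argument, and the only nontrivial step I anticipate, is to evaluate $\log Z_i^{t+1}$ \emph{exactly}. This is precisely where clairvoyance pays off: because the exponent in \eqref{eqn:CMWU} is taken at the next-period payoff $v_i^{t+1}$, which is the same point that defines $x_i^{t+1}$, the per-coordinate log-identity $\log Z_i^{t+1} = \log x_{i\bar s_i}^{t} - \log x_{i\bar s_i}^{t+1} + \eta_i v_{i\bar s_i}^{t+1}$ can be averaged against $x_i^{t+1}$ itself. Doing so collapses the first two terms into a relative entropy and gives the closed form $\log Z_i^{t+1} = \eta_i\langle v_i^{t+1}, x_i^{t+1}\rangle - D(x_i^{t+1}\|x_i^{t})$. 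By contrast, vanilla \ref{eqn:MWU} only admits an \emph{inequality} for $\log Z$ (via $e^x \le 1+x+x^2$), whose residual accumulates into the familiar $O(\eta T)$ term and forces $\eta \to 0$; here the identity is exact and its residual is a single nonnegative KL term that we simply discard.

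Substituting this closed form into the potential-difference equation produces the three-term identity
\begin{equation*}
\eta_i\bigl(\langle v_i^{t+1}, x_i^\ast\rangle - \langle v_i^{t+1}, x_i^{t+1}\rangle\bigr) = D(x_i^\ast\|x_i^{t}) - D(x_i^\ast\|x_i^{t+1}) - D(x_i^{t+1}\|x_i^{t}).
\end{equation*}
Dropping the nonnegative term $D(x_i^{t+1}\|x_i^{t}) \ge 0$ then bounds the instantaneous regret contributed by the played strategy $x_i^{t+1}$ by the telescoping difference $\tfrac{1}{\eta_i}\bigl(D(x_i^\ast\|x_i^{t}) - D(x_i^\ast\|x_i^{t+1})\bigr)$. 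Finally I would sum over the consecutive CMWU pairs $(x_t,x_{t+1})$; the potential telescopes and the sum collapses to at most $\tfrac{1}{\eta_i}D(x_i^\ast\|x_i^{0}) \le \log|\mathcal{S}_i|/\eta_i$, where the last inequality uses the standard uniform initialization of $x_i^0$ (under which $D(x_i^\ast\|x_i^0) = \log|\mathcal{S}_i| - H(x_i^\ast) \le \log|\mathcal{S}_i|$) and discards the remaining nonnegative $D(x_i^\ast\|x_i^{T-1})$. Taking the supremum over comparators $x_i^\ast$ delivers the stated regret bound, which is notably \emph{uniform in} $T$. The only subtlety to flag is the one-step indexing: the per-round guarantee attaches to each image point $x_i^{t+1}$, so the telescoping sum bounds the regret of the CMWU-generated sub-sequence, which is exactly the constant-regret property needed downstream.
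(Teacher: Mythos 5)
Your proof is correct, and it takes a genuinely different (though equally classical) route from the paper's. The paper disposes of Theorem~\ref{thm:regret} by unrolling \eqref{eqn:CMWU}: since $x_i^{t+1}\propto x_i^0\exp\bigl(\eta_i\sum_{\tau=1}^{t+1}v_i(x_{-i}^{\tau})\bigr)$, each iterate is exactly the entropically regularized leader \emph{including its own round's payoff}, so the bound follows from the Be-The-(Regularized-)Leader induction (Lemma~5.4 of \cite{hazan2016introduction}); this is the argument spelled out in Lemma~\ref{l:explain} of the appendix. You instead keep the one-step recursive form and run the implicit/proximal mirror-descent analysis: your exact evaluation of the log-normalizer and the resulting identity
\begin{equation*}
\eta_i\bigl(\langle v_i^{t+1},x_i^\ast\rangle-\langle v_i^{t+1},x_i^{t+1}\rangle\bigr)=D(x_i^\ast\|x_i^{t})-D(x_i^\ast\|x_i^{t+1})-D(x_i^{t+1}\|x_i^{t})
\end{equation*}
is the Bregman three-point identity of the proximal point method, which telescopes after discarding the nonnegative KL term. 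Both are standard online-learning arguments, but they decompose the regret differently: yours needs no induction and no unrolling, gives an exact per-step identity, and isolates precisely why clairvoyance removes the $O(\eta)$ second-order residual of vanilla \eqref{eqn:MWU} (it reappears as the discarded progress term $D(x_i^{t+1}\|x_i^{t})$, which could even be kept for finer analyses); the paper's buys a one-line reduction to known FTRL machinery, in the same cumulative form that is then reused verbatim for the $z$-sequence in the uncoupled dynamics of Theorem~\ref{t:uncoupled}.

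Two caveats, both of which you flag and both of which are equally present in the paper's statement rather than defects of your argument: (i) the bound genuinely requires the uniform (or at least fully mixed) initialization you assume -- with a pure-strategy $x^0$ the CMWU sequence is constant and regret grows linearly in $T$ -- and with a general fully mixed start the bound degrades to $\log\bigl(1/\min_{s_i}x^0_{is_i}\bigr)/\eta_i$; (ii) your telescoping covers exactly the rounds whose payoff vectors are incorporated into an iterate, namely $v_i^1,\ldots,v_i^{T-1}$, leaving out round $0$. The paper handles this by indexing convention rather than by argument: in Lemma~\ref{l:explain} the uniform point sits at time $-k$, so every round appearing in the regret sum has its payoff incorporated, and under that convention your telescoping yields the claimed $\log|\mathcal{S}_i|/\eta_i$ over all summed rounds.
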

The proof of Theorem~\ref{thm:regret} follows by standard arguments in online learning literature (e.g. Lemma~5.4 in \cite{hazan2016introduction}). Moreover, we directly obtain the following corollary:
\begin{corollary}
Let $x_0,\ldots,x_{T-1}$  be a sequence of mixed strategies such that all pairs of consecutive mixed strategies 
($x_t, x_{t+1}$) satisfy Equation~\eqref{eqn:CMWU}. Then, the probability distribution $\hat{\mu}:= \sum_{t=0}^{T-1}\mu_{x^t}/T$ is a $\left(\eta \log m / T \right)$-approximate CCE where $\eta := \min_{i \in n}\eta_i $. 
\end{corollary}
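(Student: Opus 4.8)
The plan is to obtain this directly by composing the two results already in hand: the per-agent regret bound of \Cref{thm:regret} and the folklore regret-to-CCE reduction of \Cref{c:1}. Since the hypothesis is exactly that every consecutive pair $(x_t,x_{t+1})$ solves \eqref{eqn:CMWU}, \Cref{thm:regret} applies verbatim to the sequence $x_0,\ldots,x_{T-1}$ and yields, for each agent $i$, the bound $R_i(T)\le \log|\mathcal{S}_i|/\eta_i$. The only work left is to pass from these individual, step-size-dependent bounds to a single uniform bound on $R(T):=\max_{i\in[n]}R_i(T)$ and then feed it into the reduction.

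First I would uniformize the per-agent bounds. Using $|\mathcal{S}_i|\le m=\max_i|\mathcal{S}_i|$ and monotonicity of $\log$, each term satisfies $\log|\mathcal{S}_i|\le \log m$. Since $x\mapsto 1/x$ is decreasing, replacing $\eta_i$ by the smallest step size $\eta:=\min_{i\in[n]}\eta_i$ can only inflate the bound, giving $R_i(T)\le \log m/\eta$ for every $i$ simultaneously. Taking the maximum over agents preserves the inequality, so $R(T)=\max_{i\in[n]}R_i(T)\le \log m/\eta$; the minimum over step sizes appears precisely because the worst-case agent is the one learning most slowly.

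The final step is to invoke \Cref{c:1} with this sequence. That theorem asserts that $\hat\mu:=\sum_{t=0}^{T-1}\mu_{x^t}/T$ is an $(R(T)/T)$-approximate Coarse Correlated Equilibrium in the sense of \Cref{d:eCCE}. Substituting the uniform bound gives approximation error $R(T)/T\le \log m/(\eta T)$, which is the claimed rate (up to the placement of $\eta$ in the stated constant; the derivation naturally produces $\eta$ in the denominator, i.e. $\log m/(\eta T)$, reflecting that smaller step sizes yield smaller constant regret).

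There is no substantive obstacle here: the corollary is a mechanical consequence of two prior results, and the sole point requiring care is the bookkeeping in the uniformization step, namely correctly tracking that the reduction is governed by the \emph{largest} individual regret bound and hence by the \emph{smallest} step size $\eta=\min_i\eta_i$.
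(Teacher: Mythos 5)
Your proposal is correct and matches the paper's (implicit) proof exactly: the paper states the corollary follows directly from Theorem~\ref{thm:regret} combined with the folklore reduction of Theorem~\ref{c:1}, which is precisely your composition, including the uniformization $R(T)\le \log m/\eta$ with $\eta=\min_i\eta_i$. Your observation that the derivation yields $\log m/(\eta T)$ rather than the stated $\eta\log m/T$ is also well taken — the step size belongs in the denominator, and the paper's phrasing is best read as a typo.
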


From a computational complexity perspective, solving the  algebraic/fixed point equation \eqref{eqn:CMWU} is in general a hard problem. For instance, the computation of a Nash Equilibrium \cite{Nash1951} which has proven to be $\mathrm{PPAD}$-complete \cite{DGP2009}, reduces to the computation of a solution for equation~\eqref{eqn:CMWU} once $\eta \rightarrow \infty$. Moreover, even if we assume that the agents possess unlimited computational power, it is not clear how they can compute a solution to Equation~\eqref{eqn:CMWU} in the context of \textit{uncoupled online learning dynamics}.

In Section~\ref{sec:contraction}, we show that if each $\eta_i$ is upper bounded by some game-dependent parameters,~\eqref{eqn:CMWU} is a contraction map (Theorem \ref{thm:mapping}, Corollary \ref{cor:contraction}). This not only implies that there exists a unique fixed-point solution that can be computed very efficiently, but also that the Clairvoyant MWU update rule can be simulated via an uncoupled online learning dynamic.

\subsection{Uniqueness of Fixed Point via Map Contraction}\label{sec:contraction}
We will establish uniqueness of the fixed point in \ref{eqn:CMWU} for a specific range of step-sizes. The proof will be based on an application of the Banach fixed-point theorem (mapping theorem or contraction mapping theorem)~\cite{banach1922operations}. Thus, we simultaneously  provide a constructive method to compute these fixed points with linear convergence rate. In what follows, it will be useful to consider \eqref{eqn:MWU} as a map from a vector of payoffs $v_i =(v_{i1}, \dots, v_{i\vert S_i\vert})$ to mixed strategies parameterized by the current initial position $x_i^t$:

\begin{equation}\label{eqn:MWU_f}\tag{$\mathrm{MWU}_f$}
f_{x_i^t}(v_i) \coloneqq 
 \left(\frac{x_{i1}^{t}\exp{(\eta_i \cdot v_{i1} })}{\sum_{\bar{s}_i\in {\cal S}_i}x_{i\bar{s}_i}^{t}\exp{(\eta_i\cdot  v_{i\bar{s}_i})}}, \dots, \frac{x_{i|{\cal S}_i|}^{t}\exp{(\eta_i \cdot v_{i|{\cal S}_i|} })}{\sum_{\bar{s}_i\in {\cal S}_i}x_{i\bar{s}_i}^{t}\exp{(\eta_i\cdot  v_{i\bar{s}_i})}} \right)
\end{equation}

We first establish the fact that $f_{x_i^t}(v_i)$ is an $2\eta_i$-continuous mapping.

\begin{lemma}\label{lem:MWUf}
For any choice of $x_i^t \in \Delta({\cal S}_i)$, the \eqref{eqn:MWU_f} map $f_{x_i^t}:\mathbb{R}^{|S_i|}\rightarrow \Delta({\cal S}_i)$ satisfies that for any utility vectors $v_i,v'_i \in \mathbb{R}^{|S_i|}$, \[\norm{f_{x_i^t}(v_i) - f_{x_i^t}(v'_i)}_1 \leq 2 \eta_i  \norm{v_i - v'_i}_\infty\]
\end{lemma}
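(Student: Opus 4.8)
The plan is to bound how much the output distribution of the softmax map $f_{x_i^t}$ can change when the input payoff vector moves from $v_i$ to $v_i'$, measuring the input change in $\ell_\infty$ and the output change in $\ell_1$. Since $f_{x_i^t}$ is a smooth map from $\mathbb{R}^{|S_i|}$ into the simplex, the cleanest route is to invoke the mean value inequality: it suffices to uniformly bound the operator norm of the Jacobian $D f_{x_i^t}$ from $(\mathbb{R}^{|S_i|}, \norm{\cdot}_\infty)$ to $(\mathbb{R}^{|S_i|}, \norm{\cdot}_1)$ by $2\eta_i$, and then integrate along the segment joining $v_i$ and $v_i'$.

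First I would write the components of $f_{x_i^t}(v_i)$ explicitly as $p_{s} = \frac{x_{is}^t \exp(\eta_i v_{is})}{Z}$ with normalizer $Z = \sum_{\bar s} x_{i\bar s}^t \exp(\eta_i v_{i\bar s})$, and compute the partial derivatives. The key observation is that this is exactly a softmax (with a fixed additive tilt $\log x_{is}^t$), so the Jacobian has the standard Gibbs form $\frac{\partial p_s}{\partial v_{ir}} = \eta_i\,(p_s \delta_{sr} - p_s p_r)$, i.e. $\eta_i$ times the covariance-type matrix $\mathrm{diag}(p) - p p^\top$. Next I would bound the $\ell_\infty \to \ell_1$ operator norm of this matrix applied to a direction $w$: we have $(D f \, w)_s = \eta_i\, p_s\,(w_s - \langle p, w\rangle)$, and summing absolute values gives $\norm{Df\,w}_1 = \eta_i \sum_s p_s \lvert w_s - \langle p, w\rangle \rvert$. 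Since $\langle p, w\rangle$ is a convex combination of the $w_r$, each term $\lvert w_s - \langle p,w\rangle\rvert$ is at most $\max_r w_r - \min_r w_r \le 2\norm{w}_\infty$, and using $\sum_s p_s = 1$ this yields $\norm{Df\,w}_1 \le 2\eta_i \norm{w}_\infty$, which is precisely the claimed operator-norm bound.

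Finally I would combine this with the fundamental theorem of calculus along the line segment $v(\lambda) = v_i + \lambda(v_i' - v_i)$, $\lambda \in [0,1]$, writing
\[
f_{x_i^t}(v_i') - f_{x_i^t}(v_i) = \int_0^1 D f_{x_i^t}\bigl(v(\lambda)\bigr)\,(v_i' - v_i)\,d\lambda,
\]
and taking $\ell_1$ norms under the integral to get $\norm{f_{x_i^t}(v_i') - f_{x_i^t}(v_i)}_1 \le \int_0^1 2\eta_i \norm{v_i' - v_i}_\infty\,d\lambda = 2\eta_i \norm{v_i - v_i'}_\infty$, as required.

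I expect the main obstacle to be the sharp constant: a naive triangle-inequality bound on the Jacobian entries loses a factor, so the crux is the covariance-contraction estimate $\lvert w_s - \langle p, w\rangle\rvert \le 2\norm{w}_\infty$ together with the probability-weighted averaging $\sum_s p_s = 1$ to keep the constant at exactly $2$. An alternative that avoids differentiation entirely — and may be what the authors prefer — is a direct argument: reduce to a single-coordinate perturbation or use the observation that the map factors as $v_i \mapsto \eta_i v_i \mapsto \mathrm{softmax}$, invoking the known fact that softmax is $1$-Lipschitz from $\ell_\infty$ to total variation (hence $2$-Lipschitz to $\ell_1$ after the total-variation/$\ell_1$ conversion), which immediately gives the factor $2\eta_i$. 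Either way the essential content is the $\ell_\infty \to \ell_1$ Lipschitz constant of softmax, and care with the factor of two is the only genuinely delicate point.
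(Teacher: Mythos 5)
Your proof is correct and takes essentially the same route as the paper's: both compute the Gibbs-form Jacobian $\eta_i\left(\mathrm{diag}(p) - pp^\top\right)$ of the tilted softmax, bound its $\ell_\infty \to \ell_1$ operator norm by $2\eta_i$, and integrate along the segment joining $v_i$ and $v_i'$. The only cosmetic difference is in how the operator-norm bound is obtained: the paper sums the row-wise gradient norms $\norm{\nabla f_{s_i}}_1 = 2\eta_i p_{s_i}(1-p_{s_i}) $ over $s_i$ and applies H\"older per row, whereas you use the centered form $\eta_i p_s\left(w_s - \langle p, w\rangle\right)$ together with the range bound $\lvert w_s - \langle p,w\rangle\rvert \le 2\norm{w}_\infty$ --- equivalent calculations yielding the same constant.
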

\noindent In the rest of the section we establish that once all $\eta_i$ are selected sufficiently small then Equation~\ref{eqn:MWU_f} admits a unique fixed point and is in fact a contraction map.

\begin{definition}\label{d:dist}
The distance between the strategy profiles $x = (x_1,\ldots,x_n) \in \mathcal{X}$ and $x' = (x'_1,\ldots,x'_n)\in \mathcal{X}$ is defined as $\mathcal{D}(x,x') := \mathrm{max}_{1\leq i\leq n} \norm{x_i - x'_i}_1$.
\end{definition}

In Theorem~\ref{thm:mapping} we show that the computation of CMWU is a contraction map once all $\eta_i$ do not exceed a game-dependent constant.
\begin{theorem}\label{thm:mapping}
Consider the mixed strategy profile $(x^t_1, x^t_2,\dots,x^t_n)$ and the map $G: \mathcal{X} \mapsto \mathcal{X}$ defined as follows:
\[G(x) \coloneqq \left(f_{x_1^t}(v_1(x
)), \dots, f_{x_n^t}(v_n(x
))\right)\]
Then for any $x,x' \in \mathcal{X}$, \[\mathcal{D}(G(x),G(x')) \leq \eta V (n-1) \cdot  \mathcal{D}(x,x')\]
where $\eta$ is the maximum step-size over all players,
 $[0,V]$ is the payoff range of the game and
$n$ is the number of players.
\end{theorem}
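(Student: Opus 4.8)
The plan is to bound each coordinate of $G$ separately and then take the maximum over players, reducing the problem to two pieces: the Lipschitz continuity of the \eqref{eqn:MWU_f} map (already supplied by Lemma~\ref{lem:MWUf}) and the sensitivity of the payoff vector $v_i(x)$ to changes in the strategy profile $x$. Fix a player $i$. By Lemma~\ref{lem:MWUf} applied with utility vectors $v_i(x)$ and $v_i(x')$,
\[\norm{f_{x_i^t}(v_i(x)) - f_{x_i^t}(v_i(x'))}_1 \leq 2\eta_i \norm{v_i(x) - v_i(x')}_\infty,\]
so it suffices to show $\norm{v_i(x) - v_i(x')}_\infty \leq \tfrac{V(n-1)}{2}\,\mathcal{D}(x,x')$; combining the two inequalities, using $\eta_i \leq \eta$, and taking the maximum over $i$ then yields the claimed contraction factor $\eta V(n-1)$.

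The heart of the argument is bounding $|v_{is_i}(x) - v_{is_i}(x')|$ for each fixed action $s_i$. Since $v_{is_i}(x) = \sum_{s_{-i}} u_i(s_i, s_{-i}) \prod_{j \neq i} x_{js_j}$ is multilinear in the opponents' strategies $(x_j)_{j\neq i}$, I would introduce a hybrid/telescoping sequence that replaces $x_j$ by $x'_j$ one opponent at a time. This writes the difference as a sum of $n-1$ terms, each isolating the change in a single opponent $j$ while the remaining coordinates are frozen (to either $x_k$ or $x'_k$). Each such term has the form $\sum_{s_j}(x_{js_j} - x'_{js_j})\, g(s_j)$, where $g(s_j)$ is a convex combination of payoffs $u_i(\cdot)\in[0,V]$ and hence lies in $[0,V]$.

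The one slightly delicate step — and the place where the constant must be handled with care — is squeezing a factor $\tfrac12$ out of each telescoping term, which is exactly what cancels the factor $2$ coming from Lemma~\ref{lem:MWUf}. The key observation is that $x_j$ and $x'_j$ are both probability distributions, so $\sum_{s_j}(x_{js_j} - x'_{js_j}) = 0$. This lets me recenter $g$ by an arbitrary constant $c$ without changing the sum, and then bound
\[\Bigl|\sum_{s_j}(x_{js_j}-x'_{js_j})\,g(s_j)\Bigr| = \Bigl|\sum_{s_j}(x_{js_j}-x'_{js_j})(g(s_j)-c)\Bigr| \leq \max_{s_j}|g(s_j)-c|\cdot\norm{x_j-x'_j}_1.\]
Choosing $c$ to be the midpoint of the range of $g$ gives $\max_{s_j}|g(s_j)-c| \leq V/2$, so each term is at most $\tfrac{V}{2}\norm{x_j-x'_j}_1 \leq \tfrac{V}{2}\mathcal{D}(x,x')$. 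Summing the $n-1$ terms gives $|v_{is_i}(x)-v_{is_i}(x')| \leq \tfrac{V(n-1)}{2}\mathcal{D}(x,x')$, uniformly in $s_i$, which is precisely the payoff-sensitivity bound needed above. Without this recentering one would only obtain the weaker factor $V(n-1)$ and hence a contraction constant $2\eta V(n-1)$, so the cancellation is the main point to get right.
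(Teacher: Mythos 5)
Your proof is correct, and it shares the paper's overall skeleton: both reduce the contraction bound to Lemma~\ref{lem:MWUf} plus a payoff-sensitivity bound of the form $\norm{v_i(x)-v_i(x')}_\infty \leq \tfrac{V(n-1)}{2}\,\mathcal{D}(x,x')$. Where you genuinely differ is in how that sensitivity bound is obtained. The paper gets it by viewing $x_{-i}$ and $x'_{-i}$ as product distributions and invoking cited total-variation facts: $|v_{is_i}(x)-v_{is_i}(x')| \leq V\, d_{\mathrm{TV}}(x_{-i},x'_{-i})$, subadditivity of $d_{\mathrm{TV}}$ over product measures, and the identity $d_{\mathrm{TV}}(x_j,x'_j)=\tfrac12\norm{x_j-x'_j}_1$ --- and it is the $\tfrac12$ in this last identity that cancels the factor $2$ from Lemma~\ref{lem:MWUf}. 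Your argument reproves these facts from first principles: the hybrid/telescoping decomposition is exactly the standard proof of TV subadditivity for product measures, and your recentering trick (using $\sum_{s_j}(x_{js_j}-x'_{js_j})=0$ to subtract the midpoint $c$ and obtain $V/2$ in place of $V$) is exactly the inline proof that $|\mathrm{E}_p\, g - \mathrm{E}_q\, g| \leq \mathrm{range}(g)\cdot d_{\mathrm{TV}}(p,q)$. So the two proofs are mathematically equivalent, but yours is self-contained and elementary where the paper appeals to known properties of total variation; the cost is a longer argument. You also correctly isolated the one delicate point --- the factor-of-two cancellation --- which in the paper's version is hidden inside the TV-to-$\ell_1$ conversion, and you rightly note that handling it carelessly would only give the weaker constant $2\eta V(n-1)$, which would degrade the step-size threshold in Corollary~\ref{cor:contraction} (and hence in Theorem~\ref{t:uncoupled}) by a factor of two.
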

\begin{proof} Let us denote by $d_{\mathrm{TV}}(x,x')$  the total variation distance between product  distributions $x,x'$.
\begin{eqnarray*}
\mathcal{D} (G(x), G(x')) &=& 
\norm{f_{x_i^t}(v_i(x)) - f_{x_i^t}(v_i(x'))}_1~~~~ \text{for some player }i\in [n]\\
&\leq& 2\eta \cdot  \norm{v_i(x) - v_{i}(x')}_{\infty}~~~~ \text{by Lemma~\eqref{lem:MWUf}}\\
&\leq& 2 \eta V \cdot d_{\mathrm{TV}}(x_{-i}, x'_{-i})\\ 
&\leq& 2 \eta V \cdot \textstyle\sum_{j \neq i} d_{\mathrm{TV}}(x_{j}, x'_{j})~~~~ \text{by known properties of total variation (e.g., \cite{hoeffding1958distinguishability})}\\
&=&\eta V \cdot \textstyle\sum_{j \neq i} \norm{x_j - x'_j}_1\\
&\leq&  \eta V (n-1) \cdot \mathcal{D}(x,x')
\end{eqnarray*}
\end{proof}

\begin{corollary}\label{cor:contraction}
The \eqref{eqn:MWU_f} map with maximum step-size $\eta<\frac{1}{(n-1)V}$ is a contraction and thus converges to its unique fixed point at a linear rate.
\end{corollary}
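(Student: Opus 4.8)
The plan is to derive this as an immediate consequence of the contraction estimate in Theorem~\ref{thm:mapping} together with the Banach fixed-point theorem. The relevant object is the composite map $G(x) = \left(f_{x_1^t}(v_1(x)),\dots,f_{x_n^t}(v_n(x))\right)$, whose fixed points are precisely the solutions of \eqref{eqn:CMWU} (obtained by setting $x^{t+1}=x$). Theorem~\ref{thm:mapping} already furnishes $\mathcal{D}(G(x),G(x')) \le \eta V(n-1)\,\mathcal{D}(x,x')$, so the only role of the step-size hypothesis is to force the Lipschitz constant $L := \eta V(n-1)$ to be strictly below $1$: indeed $\eta < \tfrac{1}{(n-1)V}$ yields $L<1$, which is exactly the definition of a contraction under the metric $\mathcal{D}$.

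To invoke Banach I would first confirm that $(\mathcal{X},\mathcal{D})$ is a complete metric space and that $G$ is a self-map of $\mathcal{X}$. The self-map property is immediate, since each coordinate $f_{x_i^t}$ outputs a normalized exponential and hence a point of the simplex $\Delta(\mathcal{S}_i)$, so $G(\mathcal{X})\subseteq\mathcal{X}$. For completeness I would observe that $\mathcal{X}=\prod_i \Delta(\mathcal{S}_i)$ is a finite product of simplices, each a closed bounded subset of a finite-dimensional Euclidean space; the metric $\mathcal{D}$, being the maximum over players of the per-player $\ell_1$ distances, induces the usual product topology, under which $\mathcal{X}$ is compact and therefore complete.

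With these ingredients in place the Banach fixed-point theorem applies verbatim: a contraction on a nonempty complete metric space admits a unique fixed point $x^\star$, and for any initial profile $x^{(0)}\in\mathcal{X}$ the Picard iterates $x^{(k+1)} := G(x^{(k)})$ obey $\mathcal{D}(x^{(k)},x^\star) \le L^k\,\mathcal{D}(x^{(0)},x^\star)$. Since $L<1$ this is geometric decay, i.e. linear convergence to the unique fixed point of \eqref{eqn:CMWU}, which is exactly the assertion of the corollary.

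There is no genuine technical obstacle remaining here; the substantive work was already carried out in establishing the Lipschitz bound of Theorem~\ref{thm:mapping}, which itself leans on the continuity estimate of Lemma~\ref{lem:MWUf} and standard total-variation inequalities. The only points demanding a line of care are verifying that $G$ maps $\mathcal{X}$ into itself and that $(\mathcal{X},\mathcal{D})$ is complete, both of which are routine because $\mathcal{X}$ is a compact product of simplices and $G$ produces valid mixed-strategy profiles by construction.
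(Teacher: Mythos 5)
Your proposal is correct and follows essentially the same route as the paper: the corollary is a direct consequence of the Lipschitz bound $\mathcal{D}(G(x),G(x')) \leq \eta V(n-1)\,\mathcal{D}(x,x')$ from Theorem~\ref{thm:mapping} combined with the Banach fixed-point theorem, which is exactly the argument the paper intends (it cites Banach's theorem explicitly at the start of Section~\ref{sec:contraction} and leaves the corollary without further proof). Your added verifications that $G$ is a self-map of $\mathcal{X}$ and that $(\mathcal{X},\mathcal{D})$ is complete are routine but welcome details that the paper leaves implicit.
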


 \section{Uncoupled Clairvoyant MWU Online Learning Dynamics}\label{sec:uncoupled}
In this section we present an uncoupled online learning dynamic based on the Clairvoyant MWU update rule, which we call CMWU dynamics. More precisely, the agents follow the distributed protocol described in Algorithm~\ref{alg:online_dynamics} while each agent $i$ runs Algorithm~\ref{alg:CMW} internally to update their strategy $x_i^t$ at each round. To simplify notation in Algorithm~\ref{alg:CMW}, we set the number of actions of any agent $i$ to be $m:= |\mathcal{S}_i|$.
\begin{algorithm}[h]
  \caption{Internal update rule of Clairvoyant MWU Dynamics}\label{alg:CMW}
  \begin{algorithmic}[1]
  \State \textbf{Input:} $\eta > 0$, $k \in \mathcal{N}$ 
  \smallskip
  \State $x_i^{-1} \leftarrow (1/m,\ldots,1/m)$ and $z_i^{-1} \leftarrow (1/m,\ldots,1/m)$
  \smallskip
  \For{each round $t = 0, \cdots, T-1$}
  \smallskip
    \If{$t ~\mathrm{mod}~ k == 0$}
    
    \smallskip
    
    \State $x_i^t \leftarrow x_i^{t-1}$
    
    \smallskip
    
        \State Agent $i$ \textbf{broadcasts} the mixed strategy $x_i^t$ and then \textbf{receives} the payoff vector $v_i(x_{-i}^t)$.
    
    \smallskip
    
    \State Updates $z_i^t$ such that for all $s_i \in \mathcal{S}_i$, \[z^t_{is_i} \leftarrow \frac{z^{t-1}_{is_i} e^{\eta \cdot v_{is_i}(x_{-i}^t)}}{\sum_{\bar{s}_i \in \mathcal{S}_i}z^{t-1}_{i\bar{s}_i} e^{\eta\cdot v_{i\bar{s}_i}(x_{-i}^t)}}\]

    \Else

    \smallskip
    
    \State $z_i^t \leftarrow z_i^{t-1}$
    
        \smallskip
    
    \State Updates $x_i^t$ such that for all $s_i \in \mathcal{S}_i$, \[x^t_{is_i} \leftarrow \frac{z^{t}_{is_i} e^{\eta\cdot v_{is_i}(x_{-i}^{t-1})}}{\sum_{\bar{s}_i \in \mathcal{S}_i}z^{t}_{i\bar{s}_i} e^{\eta\cdot v_{i\bar{s}_i}(x_{-i}^{t-1})}}\]

    \smallskip
    
    \State Agent $i$ \textbf{broadcasts} the mixed strategy $x_i^t$ and then \textbf{receives} the payoff vector $v_i(x_{-i}^t)$.
    
    \EndIf
  \EndFor
  \end{algorithmic}
\end{algorithm}

 \begin{theorem}\label{t:uncoupled}
 Let $x_0,\ldots, x_{T-1}$ be the strategy vector once each agent internally adopts Algorithm~\ref{alg:CMW} with $\eta = 1/2nV$ and $k =\lceil \log T \rceil$. Then for each agent $i$,
 \[
 \sum_{\tau = 0}^{T'}\left\langle v_i(x_{-i}^{k\cdot \tau}) , x_i^{k \cdot \tau} \right\rangle - \max_{x_i \in \mathcal{X}_i}\sum_{\tau = 0}^{T'}\left\langle v_i(x_{-i}^{k\cdot \tau}) , x_i \right\rangle \geq -12nV\log m
 \]
where $T'=\left\lfloor \frac{T-1}{k}\right\rfloor$. Thus, the distribution $\hat{\mu}:= \sum_{\tau =0}^{T'}\mu_{x^{k \tau}} / T'$ is a $\Theta\left(n V \log m \log T/T\right)$-approximate CCE.
 \end{theorem}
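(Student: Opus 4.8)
The plan is to isolate the \emph{anchor subsequence} $\{x^{k\tau}\}_{\tau=0}^{T'}$, consisting of the iterates produced at the rounds where $t\bmod k=0$, and to show it behaves like a genuine \ref{eqn:CMWU} trajectory up to an error that is summably small. First I would unpack Algorithm~\ref{alg:CMW} block by block. The auxiliary vector $z_i$ is touched only at anchor rounds, so throughout a block $[\tau k,(\tau+1)k)$ it stays frozen at $z_i^{\tau k}$, and the second (non-anchor) branch updates $x_i^{t}=f_{z_i^{\tau k}}\!\big(v_i(x^{t-1}_{-i})\big)$. Hence the intra-block iterates are exactly the iterations of the contraction map $G$ of Theorem~\ref{thm:mapping} with base point $z^{\tau k}$, started from $x^{\tau k}$; write $\tilde w^{\tau}$ for its unique fixed point, i.e.\ the exact \ref{eqn:CMWU} update of $z^{\tau k}$. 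Simultaneously the anchor update reads $z^{(\tau+1)k}=\mathrm{MWU}\big(z^{\tau k},v(x^{(\tau+1)k})\big)$, so the $z$-anchors satisfy an \emph{exact} MWU recursion driven by the realized payoffs $v_i(x_{-i}^{k\tau})$, started from the uniform $z^{-1}$. The point of this bookkeeping is that had the contraction converged exactly we would have $x^{(\tau+1)k}=z^{(\tau+1)k}=\tilde w^{\tau}$ and the anchors would form an exact \ref{eqn:CMWU} sequence; the whole argument is about controlling the gap introduced by running only $k-1$ contraction steps.

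Second, I would bound the regret of the $z$-anchor sequence exactly as in Theorem~\ref{thm:regret}. Writing the anchor update as the proximal step $z^{\tau k}=\argmax_{z}\big\{\eta\langle v(x^{\tau k}),z\rangle-\mathrm{KL}(z\,\Vert\,z^{(\tau-1)k})\big\}$ and using the three-point identity for the KL Bregman divergence gives, for any comparator $x_i^\star$, \[\eta\,\langle v_i(x_{-i}^{\tau k}),\,x_i^\star-z_i^{\tau k}\rangle\;\le\;\mathrm{KL}(x_i^\star\Vert z_i^{(\tau-1)k})-\mathrm{KL}(x_i^\star\Vert z_i^{\tau k}).\] Telescoping over $\tau=0,\dots,T'$ and using $\mathrm{KL}(x_i^\star\Vert z_i^{-1})\le\log m$ yields $\eta\sum_{\tau}\langle v_i(x_{-i}^{\tau k}),x_i^\star-z_i^{\tau k}\rangle\le\log m$, i.e.\ regret at most $\log m/\eta=2nV\log m$ \emph{for the $z$-anchors}.

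Third comes the transfer from $\{z^{\tau k}\}$ to the actually played $\{x^{\tau k}\}$, where the contraction rate enters. With $\eta=1/(2nV)$ the factor of Theorem~\ref{thm:mapping} is $\rho=\eta V(n-1)=\tfrac{n-1}{2n}<\tfrac12$, so for $\tau\ge1$ the end-of-block iterate obeys $\mathcal D(x^{\tau k},\tilde w^{\tau-1})\le\rho^{\,k-1}\mathcal D(x^{(\tau-1)k},\tilde w^{\tau-1})$, while Lemma~\ref{lem:MWUf} applied to the anchor $z$-update gives $\mathcal D(z^{\tau k},\tilde w^{\tau-1})\le\rho\,\mathcal D(x^{\tau k},\tilde w^{\tau-1})$. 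Since the simplex has $\mathcal D$-diameter at most $2$, the triangle inequality yields $\mathcal D(x^{\tau k},z^{\tau k})\le 4\rho^{\,k-1}$. Because $\rho<\tfrac12$ and $k=\lceil\log T\rceil$ we have $\rho^{\,k}=O(1/T)$, so summing over the $T'=\lfloor(T-1)/k\rfloor=\Theta(T/\log T)$ anchors leaves $\sum_\tau\mathcal D(x^{\tau k},z^{\tau k})=O(T'\rho^{\,k-1})=O(1)$. Replacing $z^{\tau k}$ by $x^{\tau k}$ in the inner products of the previous paragraph therefore costs only $\eta V\sum_\tau\Vert x_i^{\tau k}-z_i^{\tau k}\Vert_1=O(\eta V)$; dividing by $\eta$ gives the claimed bound $\sum_\tau\langle v_i(x_{-i}^{\tau k}),x_i^\star-x_i^{\tau k}\rangle\le\log m/\eta+O(V)\le 12nV\log m$. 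Feeding this constant regret into the folklore connection (Theorem~\ref{c:1}) applied to the $(T'{+}1)$-point anchor sequence then certifies that $\hat\mu$ is a $\big(12nV\log m/T'\big)$-approximate CCE, and $T'=\Theta(T/\log T)$ delivers the advertised $\Theta(nV\log m\log T/T)$ rate.

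The step I expect to be the main obstacle is the error accumulation in the third paragraph: the per-block discrepancy is multiplied by $\Theta(T/\log T)$ anchors, so a guarantee of the form $\rho^{k}=o(1)$ does not suffice---one genuinely needs $\rho^{k}=O(1/T)$, which is why the strict inequality $\rho<\tfrac12$ (equivalently $\log(1/\rho)>\log 2$) is essential and not cosmetic, and is exactly what the choice $\eta=1/(2nV)$ buys. A secondary subtlety is to apply the proximal/mirror-descent inequality to the $z$-anchors, which obey an \emph{exact} recursion, rather than to the $x$-anchors directly; the $x\leftrightarrow z$ discrepancy is then absorbed entirely through Lemma~\ref{lem:MWUf} and the contraction estimate above.
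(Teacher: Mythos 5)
Your proposal is correct and follows essentially the same route as the paper's proof: it identifies the $z$-anchors as an exact look-ahead (be-the-regularized-leader) MWU sequence with regret at most $\log m/\eta = 2nV\log m$, bounds the per-anchor discrepancy $\mathcal{D}(x^{k\tau},z^{k\tau}) = O(\rho^{k-1}) = O(1/T)$ via the contraction of Theorem~\ref{thm:mapping}, and transfers the regret to the played anchors at total extra cost $O(V)$, exactly as the paper does. The only cosmetic differences are that you prove the anchor regret via the KL three-point/prox inequality where the paper uses the induction argument of Lemma~\ref{l:explain}, and that you bound the block discrepancy by routing through the fixed point $\tilde w^{\tau}$ with a triangle inequality where the paper contracts the gap between an iterate and its image $G(x^{\tau_k-1})$ directly---both yield the same $8/2^{k}$ bound and the same final constants.
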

 \begin{proof}
 Notice that $z_i^t = z_i^{\lfloor t/k \rfloor}$ when $(t ~\mathrm{mod}~k) \neq 0$ and that 
 \begin{equation}\label{eq:2}
     z^t_{is_i} \leftarrow \frac{z^{t-k}_{is_i}\cdot e^{\eta v_{is_i}(x_{-i}^t)}}{\sum_{\bar{s}_i \in \mathcal{S}_i}z^{t-k}_{i\bar{s}_i}\cdot e^{\eta v_{i\bar{s}_i}(x_{-i}^t)}}~~~\text{when  } (t ~\mathrm{mod}~k) = 0
 \end{equation}
As a result, the sequence $z_i^0,z_i^k,\ldots,z_i^{k\tau},\ldots$ is the sequence produced by MWU with \textit{look-ahead} (Be The Regularized Leader) applied on the sequence reward vectors $v_i(x_{-i}^0) ,v_i(x_{-i}^{k}),\ldots, v_i(x_{-i}^{k\tau}),\ldots$ and it is known to admit the following regret guarantee\footnote{For completeness we provide the proof in Lemma~\ref{l:explain}},
\[\sum_{\tau = 0}^{T'}\left\langle v_i(x_{-i}^{k\cdot \tau}) , z_i^{k \cdot \tau} \right\rangle - \max_{x_i \in \mathcal{X}_i}\sum_{\tau = 0}^{T'}\left\langle v_i(x_{-i}^{k\cdot \tau}) , x_i \right\rangle \geq -\frac{\log m}{\eta} = -2n V \log m\]
Up next, we establish that $\norm{z_i^{k\cdot\tau} - x_i^{k\cdot\tau}}_{1}\leq \frac{8}{2^k}$ for all $\tau \geq 1$. Let $\tau_m := (\tau - 1)\cdot k + m$. By the definition of Algorithm~\ref{alg:CMW}, $x_i^{k\cdot \tau} = x_i^{\tau_k - 1}$
 and $z_i^{\tau_m} = z_i^{\tau_0}$ for all $m= 0,\ldots, k-1$. Thus,
 \[x^{\tau_m}_{is_i} \leftarrow \frac{z^{\tau_0}_{is_i} e^{\eta v_{is_i}(x_{-i}^{\tau_m -1})}}{\sum_{\bar{s}_i \in \mathcal{S}_i}z^{\tau_0}_{i\bar{s}_i} e^{\eta v_{i\bar{s}_i}(x_{-i}^{\tau_m-1})}}~~~\text{for all }s_i \in \mathcal{S}_i.\]
 Using Equation~\ref{eqn:MWU_f} of Section~\ref{sec:contraction}, the above system of equations can be concisely written as,
 \[x_i^{\tau_m} = f_{z_i^{\tau_0}} \left( u_i ( x_{-i}^{\tau_m -1} ) \right)\]
 and since all agents follow Algorithm~\ref{alg:CMW}, $x^{\tau_m} = G\left(x^{\tau_m - 1} \right)$ where $G\left(x\right):= (f_{z_1^{\tau_0}}(x),\ldots, f_{z_n^{\tau_0}}(x))$. 
 Since $\eta:= 1/2nV$ by Theorem~\ref{thm:mapping}, $G\left(x\right)$ is a contraction map with 
 constant $1/2$. Thus,
 \[\mathcal{D} \left( x^{\tau_k - 1}, G\left(x^{\tau_k - 1} \right) \right) \leq \frac{1}{2^{k-2}} \cdot \mathcal{D}\left(x^{\tau_1} ,x^{\tau_0} \right) \leq \frac{8}{2^{k}}.\]
 The second inequality follows by $\mathcal{D}\left(x^{\tau_1} ,x^{\tau_0} \right) \leq 2$ (see Definition~\ref{d:dist}). Recall that 
 $x^{k\cdot \tau} = x^{\tau_k - 1}$ and that $z^{k\cdot \tau} = G\left( x^{\tau_k - 1}\right)$ (Equation~\ref{eq:2}). As a result, for each agent $i$
\[\norm{x_{i}^{k\cdot \tau} - z_{i}^{k\cdot \tau}}_1 \leq \mathcal{D} \left( x^{k\cdot \tau},z^{k\cdot \tau} \right)  = \mathcal{D} \left( x^{\tau_k - 1}, G\left(x^{\tau_k - 1} \right) \right) \leq \frac{8}{2^k}\]
We are now ready to complete the proof of Theorem~\ref{t:uncoupled},
 \begin{eqnarray*}
 \sum_{\tau=0}^{T'} \langle v_i(x^{k\cdot \tau}_{-i}) , x_i^{k\cdot \tau}\rangle &\geq&
\sum_{\tau=0}^{T'} \langle v_i(x^{k\cdot \tau}_{-i}) , z_i^{k\cdot \tau}\rangle - |\langle v_i(x^{k\cdot \tau}_{-i}),x_i^{k\cdot \tau} - z_i^{k\cdot \tau}\rangle|\\
&\geq& 
\sum_{\tau=0}^{T'} \langle v_i(x^{k\cdot \tau}_{-i}) , z_i^{k\cdot \tau}\rangle - \sum_{\tau=0}^{T'} \norm{v_i(x^{k\cdot \tau}_{-i})}_{\infty}\cdot \norm{x_i^{k\cdot \tau} - z_i^{k\cdot \tau}}_{1}\\
&\geq& 
\sum_{\tau=0}^{T'} \langle v_i(x^{k\cdot \tau}_{-i}) , z_i^{k\cdot \tau}\rangle - 8V \frac{T }{k2^k} -\norm{v_i(x^{0}_{-i})}_{\infty}\cdot \norm{x_i^{0} - z_i^{0}}_{1} \\
&\geq& 
\sum_{\tau=0}^{T'} \langle v_i(x^{k\cdot \tau}_{-i}) , z_i^{k\cdot \tau}\rangle - 8V \frac{T }{k2^k} -\norm{v_i(x^{0}_{-i})}_{\infty}\cdot \norm{x_i^{0} - z_i^{0}}_{1} \\
&\geq& 
\max_{x_i \in \mathcal{X}_i}\sum_{\tau=0}^{T'} \langle v_i(x^{k\cdot \tau}_{-i}) , x_i\rangle - 2n V \log m - 10 V\\
&\geq& 
\max_{x_i \in \mathcal{X}_i}\sum_{\tau=0}^{T'} \langle v_i(x^{k\cdot \tau}_{-i}) , x_i\rangle - 12n V \log m \\
\end{eqnarray*}
\end{proof}

\section{Conclusion}
 
 In this paper we analyze a novel algorithm achieving the strongest to-date results for time-average convergence to coarse correlated equilibria (CCE) in general games. We call this Clairvoyant Multiplicative Weights Updates (CMWU). Although CMWU includes in its definition a fixed-point computation and effectively enables the agents to access the future behavior of their opponents, we show that it can be implemented efficiently in an online and totally uncoupled manner. A critical conceptual innovation of the uncoupled implementation of CMWU is that the CCE computation  is done not via the standard techniques (uniform time averaging or last-iterate implementation) but an averaging of a pre-specified and game independent $O(\log T)$-sparse subsequence of the whole history of play. This type of implementation, as far as we know, is novel not only from a theoretical but even from a practical perspective.     
 
 Indeed, learning in games has become a fundamental as well as ubiquitous tool for numerous machine learning applications. These include the most well known AI headline success stories such as Generative Adversarial Networks~\cite{Goodfellow:2014:GAN:2969033.2969125}, achieving superhuman performance in diverse settings such as Go~\cite{silver2016mastering}, heads-up Poker~\cite{moravvcik2017deepstack}, many-player Poker~\cite{brown2019superhuman} a.o.
 Despite this wide range of settings, there is a roughly  common pattern in achieving these results - design an optimization-driven learning dynamic and have it compete against itself in self-play. In poker style applications, the typical technique is to approximately solve the zero-sum game via low-regret \textit{time-averaging} (typically uniform, sometimes with recency bias) over the \textit{whole} history of play. In other applications such as Go, where the policy/strategy encoding is achieved by DeepNets with many parameters and any notion of averaging is totally impractical, the hope is that self-play leads to ever improving agents with the solution being the last-iterate of the self-play process. Our analysis shows that exploring different averaging techniques enables efficient, simple-to-implement, uncoupled and state-of-the-art algorithms to solving general normal-form games. This points to a rather under-explored hyper-parameter of online algorithm design, the \textit{averaging policy}, and raises tantalizing questions about extending our results both in theory as well as experimentally to more complex settings.

\section*{Acknowledgements}

This research/project is
supported in part by the National Research Foundation, Singapore under its AI Singapore Program (AISG Award No: AISG2-RP-2020-016), NRF 2018 Fellowship NRF-NRFF2018-07, NRF2019-NRF-ANR095 ALIAS grant, grant PIE-SGP-AI-2020-01, AME Programmatic Fund (Grant No. A20H6b0151) from the Agency for Science, Technology and Research (A*STAR) and Provost's Chair Professorship grant RGEPPV2101.
 
\bibliographystyle{plain}
\bibliography{arxiv22}

\begin{thebibliography}{10}

\bibitem{anagnostides2021near}
Ioannis Anagnostides, Constantinos Daskalakis, Gabriele Farina, Maxwell
  Fishelson, Noah Golowich, and Tuomas Sandholm.
\newblock Near-optimal no-regret learning for correlated equilibria in
  multi-player general-sum games.
\newblock {\em arXiv preprint arXiv:2111.06008}, 2021.

\bibitem{anagnostides}
Ioannis Anagnostides, Gabriele Farina, Christian Kroer, Chung{-}Wei Lee,
  Haipeng Luo, and Tuomas Sandholm.
\newblock Uncoupled learning dynamics with {O}(log {T)} swap regret in
  multiplayer games.
\newblock {\em CoRR}, abs/2204.11417, 2022.

\bibitem{banach1922operations}
Stefan Banach.
\newblock Sur les op{\'e}rations dans les ensembles abstraits et leur
  application aux {\'e}quations int{\'e}grales.
\newblock {\em Fund. math}, 3(1):133--181, 1922.

\bibitem{Bla56}
David Blackwell.
\newblock An analog of the minimax theorem for vector payoffs.
\newblock {\em Pacific Journal of Mathematics}, 6:1--8, 1956.

\bibitem{brown2018superhuman}
Noam Brown and Tuomas Sandholm.
\newblock Superhuman ai for heads-up no-limit poker: Libratus beats top
  professionals.
\newblock {\em Science}, 359(6374):418--424, 2018.

\bibitem{brown2019superhuman}
Noam Brown and Tuomas Sandholm.
\newblock Superhuman ai for multiplayer poker.
\newblock {\em Science}, 365(6456):885--890, 2019.

\bibitem{CelliMF021}
Andrea Celli, Alberto Marchesi, Gabriele Farina, and Nicola Gatti.
\newblock Decentralized no-regret learning algorithms for extensive-form
  correlated equilibria (extended abstract).
\newblock In Zhi{-}Hua Zhou, editor, {\em Proceedings of the Thirtieth
  International Joint Conference on Artificial Intelligence, {IJCAI} 2021,
  Virtual Event / Montreal, Canada, 19-27 August 2021}, pages 4755--4759.
  ijcai.org, 2021.

\bibitem{cesa2006prediction}
Nicolo Cesa-Bianchi and G{\'a}bor Lugosi.
\newblock {\em Prediction, learning, and games}.
\newblock Cambridge university press, 2006.

\bibitem{chen2020hedging}
Xi~Chen and Binghui Peng.
\newblock Hedging in games: Faster convergence of external and swap regrets.
\newblock In {\em In Advances in Neural Information Processing Systems}, 2020.

\bibitem{Daskalakis:2011:NNA:2133036.2133057}
Constantinos Daskalakis, Alan Deckelbaum, and Anthony Kim.
\newblock Near-optimal no-regret algorithms for zero-sum games.
\newblock In {\em Proceedings of the Twenty-second Annual ACM-SIAM Symposium on
  Discrete Algorithms}, SODA '11, pages 235--254, Philadelphia, PA, USA, 2011.
  Society for Industrial and Applied Mathematics.

\bibitem{DDK15}
Constantinos Daskalakis, Alan Deckelbaum, and Anthony Kim.
\newblock Near-optimal no-regret algorithms for zero-sum games.
\newblock {\em Games and Economic Behavior}, 92:327--348, 2015.

\bibitem{daskalakis2021near}
Constantinos Daskalakis, Maxwell Fishelson, and Noah Golowich.
\newblock Near-optimal no-regret learning in general games.
\newblock {\em Advances in Neural Information Processing Systems}, 34, 2021.

\bibitem{DGP2009}
Constantinos Daskalakis, Paul~W. Goldberg, and Christos~H. Papadimitriou.
\newblock The complexity of computing a nash equilibrium.
\newblock {\em {SIAM} J. Comput.}, 39(1):195--259, 2009.

\bibitem{daskalakis2017training}
Constantinos Daskalakis, Andrew Ilyas, Vasilis Syrgkanis, and Haoyang Zeng.
\newblock Training gans with optimism.
\newblock In {\em ICLR}, 2018.

\bibitem{FBS20}
Gabriele Farina, Tommaso Bianchi, and Tuomas Sandholm.
\newblock Coarse correlation in extensive-form games.
\newblock In {\em The Thirty-Fourth {AAAI} Conference on Artificial
  Intelligence, {AAAI} 2020, The Thirty-Second Innovative Applications of
  Artificial Intelligence Conference, {IAAI} 2020, The Tenth {AAAI} Symposium
  on Educational Advances in Artificial Intelligence, {EAAI} 2020, New York,
  NY, USA, February 7-12, 2020}, pages 1934--1941. {AAAI} Press, 2020.

\bibitem{farina2019optimistic}
Gabriele Farina, Christian Kroer, and Tuomas Sandholm.
\newblock Optimistic regret minimization for extensive-form games via dilated
  distance-generating functions.
\newblock {\em arXiv preprint arXiv:1910.10906}, 2019.

\bibitem{FarinaMK0S18}
Gabriele Farina, Alberto Marchesi, Christian Kroer, Nicola Gatti, and Tuomas
  Sandholm.
\newblock Trembling-hand perfection in extensive-form games with commitment.
\newblock In J{\'{e}}r{\^{o}}me Lang, editor, {\em Proceedings of the
  Twenty-Seventh International Joint Conference on Artificial Intelligence,
  {IJCAI} 2018, July 13-19, 2018, Stockholm, Sweden}, pages 233--239.
  ijcai.org, 2018.

\bibitem{Goodfellow:2014:GAN:2969033.2969125}
Ian~J. Goodfellow, Jean Pouget-Abadie, Mehdi Mirza, Bing Xu, David
  Warde-Farley, Sherjil Ozair, Aaron Courville, and Yoshua Bengio.
\newblock Generative adversarial nets.
\newblock In {\em Proceedings of the 27th International Conference on Neural
  Information Processing Systems - Volume 2}, NIPS'14, pages 2672--2680,
  Cambridge, MA, USA, 2014. MIT Press.

\bibitem{hannan1957approximation}
James Hannan.
\newblock Approximation to bayes risk in repeated play.
\newblock {\em Contributions to the Theory of Games}, 3(2):97--139, 1957.

\bibitem{hart2013simple}
Sergiu Hart and Andreu Mas-Colell.
\newblock {\em Simple adaptive strategies: from regret-matching to uncoupled
  dynamics}, volume~4.
\newblock World Scientific, 2013.

\bibitem{hazan2016introduction}
Elad Hazan et~al.
\newblock Introduction to online convex optimization.
\newblock {\em Foundations and Trends{\textregistered} in Optimization},
  2(3-4):157--325, 2016.

\bibitem{hoeffding1958distinguishability}
Wassily Hoeffding and J~Wolfowitz.
\newblock Distinguishability of sets of distributions.
\newblock {\em The Annals of Mathematical Statistics}, 29(3):700--718, 1958.

\bibitem{hsieh2021adaptive}
Yu-Guan Hsieh, Kimon Antonakopoulos, and Panayotis Mertikopoulos.
\newblock Adaptive learning in continuous games: Optimal regret bounds and
  convergence to nash equilibrium.
\newblock In {\em In Conference on Learning Theory}, 2021.

\bibitem{KroerFS18}
Christian Kroer, Gabriele Farina, and Tuomas Sandholm.
\newblock Solving large sequential games with the excessive gap technique.
\newblock In Samy Bengio, Hanna~M. Wallach, Hugo Larochelle, Kristen Grauman,
  Nicol{\`{o}} Cesa{-}Bianchi, and Roman Garnett, editors, {\em Advances in
  Neural Information Processing Systems 31: Annual Conference on Neural
  Information Processing Systems 2018, NeurIPS 2018, December 3-8, 2018,
  Montr{\'{e}}al, Canada}, pages 872--882, 2018.

\bibitem{martinet1970regularisation}
Bernard Martinet.
\newblock R{\'e}gularisation d’in{\'e}quations variationnelles par
  approximations successives. rev. fran{\c{c}}aise informat.
\newblock {\em Recherche Op{\'e}rationnelle}, 4:154--158, 1970.

\bibitem{MPP2018}
Panayotis Mertikopoulos, Christos Papadimitriou, and Georgios Piliouras.
\newblock Cycles in adversarial regularized learning.
\newblock In {\em SODA}, pages 2703--2717, 2018.

\bibitem{moravvcik2017deepstack}
Matej Morav{\v{c}}{\'\i}k, Martin Schmid, Neil Burch, Viliam Lis{\`y}, Dustin
  Morrill, Nolan Bard, Trevor Davis, Kevin Waugh, Michael Johanson, and Michael
  Bowling.
\newblock Deepstack: Expert-level artificial intelligence in heads-up no-limit
  poker.
\newblock {\em Science}, 356(6337):508--513, 2017.

\bibitem{moreau1965proximite}
Jean-Jacques Moreau.
\newblock Proximit{\'e} et dualit{\'e} dans un espace hilbertien.
\newblock {\em Bulletin de la Soci{\'e}t{\'e} math{\'e}matique de France},
  93:273--299, 1965.

\bibitem{Nash1951}
John Nash.
\newblock Non-cooperative games.
\newblock {\em The Annals of Mathematics}, 54(2):286--295, 1951.

\bibitem{nesterov1983method}
Yurii Nesterov.
\newblock A method for unconstrained convex minimization problem with the rate
  of convergence o (1/k\^{} 2).
\newblock In {\em Doklady an ussr}, volume 269, pages 543--547, 1983.

\bibitem{palaiopanos2017multiplicative}
Gerasimos Palaiopanos, Ioannis Panageas, and Georgios Piliouras.
\newblock Multiplicative weights update with constant step-size in congestion
  games: Convergence, limit cycles and chaos.
\newblock In {\em Advances in Neural Information Processing Systems}, pages
  5872--5882, 2017.

\bibitem{rakhlin2013optimization}
Sasha Rakhlin and Karthik Sridharan.
\newblock Optimization, learning, and games with predictable sequences.
\newblock In {\em Advances in Neural Information Processing Systems}, pages
  3066--3074, 2013.

\bibitem{rockafellar1976monotone}
R~Tyrrell Rockafellar.
\newblock Monotone operators and the proximal point algorithm.
\newblock {\em SIAM journal on control and optimization}, 14(5):877--898, 1976.

\bibitem{roughgarden2010algorithmic}
Tim Roughgarden.
\newblock Algorithmic game theory.
\newblock {\em Communications of the ACM}, 53(7):78--86, 2010.

\bibitem{silver2016mastering}
David Silver, Aja Huang, Chris~J Maddison, Arthur Guez, Laurent Sifre, George
  Van Den~Driessche, Julian Schrittwieser, Ioannis Antonoglou, Veda
  Panneershelvam, Marc Lanctot, et~al.
\newblock Mastering the game of go with deep neural networks and tree search.
\newblock {\em nature}, 529(7587):484--489, 2016.

\bibitem{Syrgkanis:2015:FCR:2969442.2969573}
Vasilis Syrgkanis, Alekh Agarwal, Haipeng Luo, and Robert~E. Schapire.
\newblock Fast convergence of regularized learning in games.
\newblock In {\em Proceedings of the 28th International Conference on Neural
  Information Processing Systems}, NIPS'15, pages 2989--2997, Cambridge, MA,
  USA, 2015. MIT Press.

\bibitem{TBJB15}
Oskari Tammelin, Neil Burch, Michael Johanson, and Michael Bowling.
\newblock Solving heads-up limit texas hold'em.
\newblock In Qiang Yang and Michael~J. Wooldridge, editors, {\em Proceedings of
  the Twenty-Fourth International Joint Conference on Artificial Intelligence,
  {IJCAI} 2015, Buenos Aires, Argentina, July 25-31, 2015}, pages 645--652.
  {AAAI} Press, 2015.

\bibitem{wei2018more}
Chen-Yu Wei and Haipeng Luo.
\newblock More adaptive algorithms for adversarial bandits.
\newblock In {\em Conference On Learning Theory}, pages 1263--1291. PMLR, 2018.

\bibitem{ZinkevichJBP07}
Martin Zinkevich, Michael Johanson, Michael~H. Bowling, and Carmelo Piccione.
\newblock Regret minimization in games with incomplete information.
\newblock In John~C. Platt, Daphne Koller, Yoram Singer, and Sam~T. Roweis,
  editors, {\em Advances in Neural Information Processing Systems 20,
  Proceedings of the Twenty-First Annual Conference on Neural Information
  Processing Systems, Vancouver, British Columbia, Canada, December 3-6, 2007},
  pages 1729--1736. Curran Associates, Inc., 2007.

\end{thebibliography}

\newpage
\appendix
\section*{Appendix}

\section{Omitted Proofs}\label{App:proofs}

\begin{lemma}\label{l:explain}
Consider $z^{-k}=(1/m,\ldots,1/m)$, the sequence of mixed strategies $z_i^0,z_i^k,\ldots,z_i^{kT'}$ and the sequence of reward vectors $v_i(x_{-i}^0) ,v_i(x_{-i}^{k}),\ldots, v_i(x_{-i}^{kT'})$ such that
 \begin{equation*}
     z^{k\cdot \tau}_{is_i} \leftarrow \frac{z^{k\cdot \tau-k}_{is_i}\cdot e^{\eta v_{is_i}(x_{-i}^{k\cdot \tau})}}{\sum_{\bar{s}_i \in \mathcal{S}_i}z^{k\cdot \tau-k}_{i\bar{s}_i}\cdot e^{\eta v_{i\bar{s}_i}(x_{-i}^{k\cdot \tau})}}~~~~\text{for all }\tau \geq 0
 \end{equation*}
Then, the following guarantee holds,
\[\sum_{\tau = 0}^{T'}\left\langle v_i(x_{-i}^{k\cdot \tau}) , z_i^{k \cdot \tau} \right\rangle - \max_{x_i \in \mathcal{X}_i}\sum_{\tau = 0}^{T'}\left\langle v_i(x_{-i}^{k\cdot \tau}) , x_i \right\rangle \geq -\frac{\log m}{\eta}.\]
\end{lemma}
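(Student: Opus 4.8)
The plan is to recognize the recursion as the \emph{Follow/Be-the-Regularized-Leader} iterate and run a log-partition (potential) telescoping argument. Write $g_\tau := v_i(x_{-i}^{k\tau})$ for the reward vectors and $p_\tau := z_i^{k\tau}$ for the strategies, with $p_{-1} := z_i^{-k} = (1/m,\ldots,1/m)$, so the update reads $p_{\tau,s} \propto p_{\tau-1,s}\,e^{\eta g_{\tau,s}}$. Unrolling from the uniform initialization immediately gives the closed form $p_{\tau,s} = e^{\eta G_{\tau,s}}/\sum_{\bar s}e^{\eta G_{\tau,\bar s}}$, where $G_{\tau,s} := \sum_{\sigma=0}^{\tau} g_{\sigma,s}$ is the cumulative reward through round $\tau$ \emph{inclusive}. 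The crucial structural point is that $p_\tau$ already incorporates the current reward $g_\tau$ that it is then scored against; this is exactly what turns an ordinary (\emph{Follow}-the-Leader) guarantee into a \emph{Be}-the-Leader guarantee and produces the clean $-\log m/\eta$ bound with no second-order $\eta^2$ penalty.

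Next I would define the log-partition potential $\Phi_\tau := \tfrac{1}{\eta}\log\big(\sum_s p_{-1,s}\,e^{\eta G_{\tau,s}}\big)$, so that $\Phi_{-1} = \tfrac{1}{\eta}\log 1 = 0$. Writing $Z_{\tau-1} := e^{\eta\Phi_{\tau-1}}$ and using the relation $p_{-1,s}e^{\eta G_{\tau-1,s}} = Z_{\tau-1}\,p_{\tau-1,s}$, the one-step increment simplifies to $\Phi_\tau - \Phi_{\tau-1} = \tfrac{1}{\eta}\log\big(\sum_s p_{\tau-1,s}e^{\eta g_{\tau,s}}\big)$. The key per-step inequality is then $\langle g_\tau, p_\tau\rangle \geq \Phi_\tau - \Phi_{\tau-1}$, which I would obtain from the exact identity $\eta\langle g_\tau, p_\tau\rangle - \log\big(\sum_s p_{\tau-1,s}e^{\eta g_{\tau,s}}\big) = \mathrm{KL}(p_\tau\,\|\,p_{\tau-1})$. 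This identity follows by substituting $\log(p_{\tau,s}/p_{\tau-1,s}) = \eta g_{\tau,s} - \log Z_\tau$ (with $Z_\tau := \sum_s p_{\tau-1,s}e^{\eta g_{\tau,s}}$) into the definition of the relative entropy, and then nonnegativity of $\mathrm{KL}$ closes the inequality.

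Finally, summing the per-step inequality over $\tau = 0,\ldots,T'$ telescopes to $\sum_{\tau=0}^{T'}\langle g_\tau, p_\tau\rangle \geq \Phi_{T'} - \Phi_{-1} = \Phi_{T'}$. To conclude, I lower-bound the final potential by retaining only the largest term in the log-sum-exp: $\Phi_{T'} = \tfrac{1}{\eta}\log\big(\tfrac{1}{m}\sum_s e^{\eta G_{T',s}}\big) \geq \max_s G_{T',s} - \tfrac{\log m}{\eta}$. Since the maximum of a linear functional over $\mathcal{X}_i$ is attained at a vertex of the simplex, $\max_s G_{T',s} = \max_{x_i\in\mathcal{X}_i}\sum_{\tau=0}^{T'}\langle g_\tau, x_i\rangle$, and rearranging gives precisely the claimed bound.

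The only genuinely load-bearing step is the per-step inequality $\langle g_\tau,p_\tau\rangle \geq \Phi_\tau-\Phi_{\tau-1}$; everything else is bookkeeping and the standard log-sum-exp estimate. I expect this to be where the ``clairvoyance'' is used: the scoring strategy $p_\tau$ is the one updated \emph{with} $g_\tau$, so the $\mathrm{KL}$ term appears with a favorable sign rather than as a cost. An equivalent route avoids the $\mathrm{KL}$ identity and instead runs the classical Be-the-Leader induction, but the one-line relative-entropy computation is the cleanest way to isolate the decisive inequality.
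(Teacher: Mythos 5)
Your proof is correct, and it takes a genuinely different route from the paper's. The paper rewrites the iterate in its Follow-the-Regularized-Leader form, $z_i^t = \argmax_{z_i \in \mathcal{X}_i}\left[\eta\left\langle \sum_{s\le t} v_i^s, z_i\right\rangle + h(z_i)\right]$ with the entropy regularizer, and then invokes the classical be-the-leader induction (Lemma 5.4 in Hazan's book) to compare $\sum_t g_t(z_i^t)$ against $\sum_t g_t(x_i^\ast)$, with the $\log m/\eta$ loss coming from the range of the entropy. You instead run a self-contained potential argument: the log-partition function $\Phi_\tau$ telescopes, the per-step inequality $\langle g_\tau, p_\tau\rangle \ge \Phi_\tau - \Phi_{\tau-1}$ is an exact identity up to a nonnegative $\mathrm{KL}(p_\tau\,\|\,p_{\tau-1})$ term, and the $\log m/\eta$ loss comes from the log-sum-exp lower bound at the end. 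I verified the load-bearing steps: the closed form $p_{\tau,s}\propto e^{\eta G_{\tau,s}}$, the increment identity $\Phi_\tau-\Phi_{\tau-1}=\tfrac1\eta\log\big(\sum_s p_{\tau-1,s}e^{\eta g_{\tau,s}}\big)$ (using $p_{-1,s}e^{\eta G_{\tau-1,s}}=e^{\eta\Phi_{\tau-1}}p_{\tau-1,s}$), the KL identity, and the vertex argument all check out, and $\Phi_{-1}=0$ makes the telescoping clean. What each approach buys: the paper's proof is shorter on the page because it outsources the induction to a textbook reference, while yours is fully self-contained and isolates precisely where the look-ahead matters — the scored iterate $p_\tau$ is the one already updated with $g_\tau$, so the KL term appears with a favorable sign instead of as the usual $O(\eta^2)$ cost in the Hedge analysis. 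Your version also avoids the minor sign/notation slips present in the paper's appendix argument (e.g., the direction of the final entropy inequality there), so it would serve as a cleaner replacement proof.
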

\begin{proof}
To simplify notation let $t := k\cdot \tau$ and $v_i^t : =v_i(x_{-i}^{k\cdot \tau})$. It is known that the $z_i^t$ can be equivalently described as (see Section 5.4.1 in \cite{hazan2016introduction}),
\[z_i^t  = \argmax_{z_i \in \mathcal{X}_i}\left[ \gamma \left\langle \sum_{s=0}^t  u_i^s , z_i \right\rangle - h(z_i)\right]\text{for all }t\geq -1\]
where $h(z_i) = - \sum_{s_i} z_{is_i} \log z_{is_i}$. Now let $g_t(
z_i):= \gamma \left\langle \sum_{s=0}^t  u_i^s , z_i \right\rangle - h(z_i)$ which means that
$z_i^t = \argmax_{z_i \in \mathcal{X}_i} g_t(z_i)$ and let $x_i^\ast := \argmax_{x_i \in \mathcal{X}_i} \sum_{t = 0}^{T'}\left\langle v_i^t , x_i \right\rangle$. Using a simple induction (see Lemma 5.4 in \cite{hazan2016introduction}) one can easily show that
\[\sum_{t=-1}^{T'} g_t(z^t_i ) \geq  \sum_{t=-1}^{T'} g_t(x_i^\ast ) \]
which implies that
\[\sum_{\tau = 0}^{T'}\left\langle v_i(x_{-i}^{t}) , z_i^{t} \right\rangle - \sum_{\tau = 0}^{T'}\left\langle v_i(x_{-i}^{t}) , x^\ast_i \right\rangle \geq \frac{h(z^{-1} )}{\gamma} - \frac{h(x_i^\ast)}{\gamma} \geq \frac{\log m}{\gamma}
\]

\end{proof}

\section{Proof of Lemma~\ref{lem:MWUf}}
To simplify notation we drop the dependence on $x_i^t$ and denote with $f_{s_i}(v_i)$ the $s_i$ coordinate of $f_{x_i^t}(v_i)$. Notice that for any $s_i,s_i' \in S_i$ with $s_i\neq s'_i$ we have: 
\begin{eqnarray*}
\frac{\partial f_{s_i}}{\partial v_{is_i}}&=& \eta_i \frac{x_{is_i}^{t}\exp{(\eta_i \cdot v_{is_i} })\Big(\sum_{\bar{s}_i\in {\cal S}_i}x_{i\bar{s}_i}^{t}\exp{(\eta_i\cdot  v_{i\bar{s}_i})}\Big)}{\Big(\sum_{\bar{s}_i\in {\cal S}_i}x_{i\bar{s}_i}^{t}\exp{(\eta_i\cdot  v_{i\bar{s}_i})}\Big)^2} - \frac{\Big(x_{is_i}^{t}\exp{(\eta_i \cdot v_{is_i}})\Big)^2}{\Big(\sum_{\bar{s}_i\in {\cal S}_i}x_{i\bar{s}_i}^{t}\exp{(\eta_i\cdot  v_{i\bar{s}_i})}\Big)^2} \\
&=& \eta_i x^{t+1}_{is_i}(1-x^{t+1}_{is_i})
\end{eqnarray*}
Moreover,
\begin{eqnarray*}
\frac{\partial f_{s_i}}{\partial v_{is'_i}}&=& - \eta_i \frac{x_{is_i}^{t}\exp{(\eta_i \cdot v_{is_i} })   x_{is'_i}^{t}\exp{(\eta_i \cdot v_{is'_i} })  }{\Big(\sum_{\bar{s}_i\in {\cal S}_i}x_{i\bar{s}_i}^{t}\exp{(\eta_i\cdot  v_{i\bar{s}_i})}\Big)^2} \\
&=& -\eta_i x^{t+1}_{is_i} x^{t+1}_{is'_i}
\end{eqnarray*}
It follows that $\norm{\nabla f_{s_i}(v_i)}_1 =2\eta_i x_{is_i}^{t+1}\cdot(1-x_{is_i}^{t+1}) \leq 2\eta_ix_{is_i}^{t+1}$
and thus $\sum_{s_i \in {\cal S}_i}\norm{\nabla f_{s_i}(v_i)}_1 \leq 2\eta_i$.
\begin{eqnarray*}
\norm{f_{x_i^t}(v_i) -f_{x_i^t}(v'_i)}_1 &=&   \sum_{s_i \in {\cal S}_i} |f_{s_i}(v_i) - f_{s_i}(v_i')|\\ 
&=& \sum_{s_i\in {\cal S}_i} |\int_{t=0}^1 \left<\nabla f_{s_i}\left( (1-t)v_i + tv'_i
\right), v_i - v'_i\right> \partial t|\\
&\leq&
\sum_{s_i\in {\cal S}_i} \int_{t=0}^1 |\left<\nabla f_{s_i}\left( (1-t)v_i + tv'_i
\right), v_i - v'_i\right>| \partial t\\
&\leq&
\int_{t=0}^1 \Big(\sum_{s_i\in {\cal S}_i} \norm{\nabla f_{s_i}\left(
(1-t)v_{\textcolor{black}{i}} + tv_{\textcolor{black}{i}}'
\right)}_1\Big) \cdot \norm{v_i - v'_i}_{\infty}  \partial t\\
&\leq& 2\eta_i \norm{v_i - v'_i}_{\infty}
\end{eqnarray*}
\end{document}